\documentclass[a4paper,twocolumn,10.7pt,accepted=2025-12-19]{quantumarticle}
\pdfoutput=1
\usepackage[utf8]{inputenc}
\usepackage[english]{babel}
\usepackage[T1]{fontenc}
\usepackage{amssymb}
\usepackage{epsfig}
\usepackage{soul,color,epstopdf}
\usepackage{amsthm, thm-restate}
\usepackage{hyperref}
\usepackage{csquotes}
\usepackage{mathtools}
\usepackage{xcolor}
\usepackage{comment}
\usepackage{float}
\usepackage{physics}
\usepackage{dsfont}
\usepackage[caption=false]{subfig}
\usepackage{algpseudocode}
\usepackage{eufrak}
\usepackage{bbold}
\usepackage{bm}
\usepackage{accents}
\usepackage{ulem}
\usepackage{hyperref}
\usepackage{cite}

\usepackage{tikz}
\usepackage{lipsum}

\newtheorem{theorem}{Theorem}
\newtheorem*{theorem*}{Theorem}

\newtheorem{lemma}{Lemma}
\newtheorem*{lemma*}{Lemma}
\newtheorem{corollary}{Corollary}

\def\Tr{{\mathrm{Tr}}}

\def\tr{{\mathrm{tr}}}
\newcommand{\ch}{\mathcal H}
\newcommand{\cu}{\mathcal U}
\newcommand{\cg}{\mathcal G}
\newcommand{\cq}{\mathcal Q}
\newcommand{\cc}{\mathcal C}
\newcommand{\crr}{\mathcal R}
\newcommand{\cs}{\mathcal S}

\begin{document}

\title{Theory-independent randomness generation from spatial symmetries}

\author{Caroline L.\ Jones}
\email{CarolineLouise.Jones@oeaw.ac.at}
\thanks{}
\affiliation{Institute for Quantum Optics and Quantum Information,
Austrian Academy of Sciences, Boltzmanngasse 3, A-1090 Vienna, Austria}
\affiliation{Vienna Center for Quantum Science and Technology (VCQ), Faculty of Physics, University of Vienna, Vienna, Austria}
\orcid{0009-0007-2445-6001}
\author{Stefan L.\ Ludescher}
\email{Stefan.Ludescher@oeaw.ac.at}
\thanks{\newline CLJ and SLL contributed equally to this work.}
\affiliation{Institute for Quantum Optics and Quantum Information,
Austrian Academy of Sciences, Boltzmanngasse 3, A-1090 Vienna, Austria}
\affiliation{Vienna Center for Quantum Science and Technology (VCQ), Faculty of Physics, University of Vienna, Vienna, Austria}
\orcid{0000-0002-7259-1327}
\author{Albert\ Aloy}
\affiliation{Institute for Quantum Optics and Quantum Information,
Austrian Academy of Sciences, Boltzmanngasse 3, A-1090 Vienna, Austria}
\affiliation{Vienna Center for Quantum Science and Technology (VCQ), Faculty of Physics, University of Vienna, Vienna, Austria}
\orcid{0000-0002-1401-0184}
\author{Markus P.\ M\"uller}
\affiliation{Institute for Quantum Optics and Quantum Information,
Austrian Academy of Sciences, Boltzmanngasse 3, A-1090 Vienna, Austria}
\affiliation{Vienna Center for Quantum Science and Technology (VCQ), Faculty of Physics, University of Vienna, Vienna, Austria}
\affiliation{Perimeter Institute for Theoretical Physics, 31 Caroline Street North, Waterloo, Ontario N2L 2Y5, Canada}
\orcid{0000-0002-8086-5586}

\date{January 12, 2026}

\begin{abstract}
We demonstrate a fundamental relation between the structures of physical space and of quantum theory: the set of quantum correlations in a rotational prepare-and-measure scenario can be derived from covariance alone, without assuming quantum physics. To show this, we consider a semi-device-independent randomness generation scheme where one of two spatial rotations is performed on an otherwise uncharacterized preparation device, and one of two possible measurement outcomes is subsequently obtained. An upper bound on a theory-independent notion of spin is assumed for the transmitted physical system. It turns out that this determines the set of quantum correlations and the amount of certifiable randomness in this setup exactly. Interestingly, this yields the basis of a theory-independent protocol for the secure generation of random numbers. Our results support the conjecture that the symmetries of space and time determine at least part of the probabilistic structure of quantum theory.
\end{abstract}

\maketitle

\section{Introduction}
The search for physical principles, from which (elements of) quantum theory can be derived, has been a promising research direction of recent years for the foundations of physics~\cite{Hardy1,vanDam,Navascues,DakicBrukner,Chiribella,Masanes,Popescu,Mueller,Plavala}. In analogy to the derivation of the special theory of relativity from the invariance of light speed and the principle of relativity alone, the hope is that some set of simple assumptions may offer us a more profound insight into the nature of quantum theory. Here, we propose a new strategy for the reconstruction program, motivated by the interplay between quantum theory and spacetime physics.

This interplay is the subject of both historical successes and contemporary open problems. Quantum field theory was formulated by combining the principles of Minkowski spacetime with abstract quantum theory, yielding an immensely successful framework which explains almost all microphysical phenomena. Quantum gravity, on the other hand, is the contemporary problem of combining general-relativistic spacetime physics with quantum theory. These successes and open problems motivate us to ask a more general question: \textit{what is the most general consistent interplay between spacetime and probability?} In this work, we take a first, modest step in this program by determining consequences of covariance under spatial rotations around a fixed axis. As in other works within the reconstruction program, we ask the question without assuming the validity of quantum theory.

\begin{figure}[t]
\centering 
\includegraphics[trim=100 80 120 50,clip, width=1.0\columnwidth]{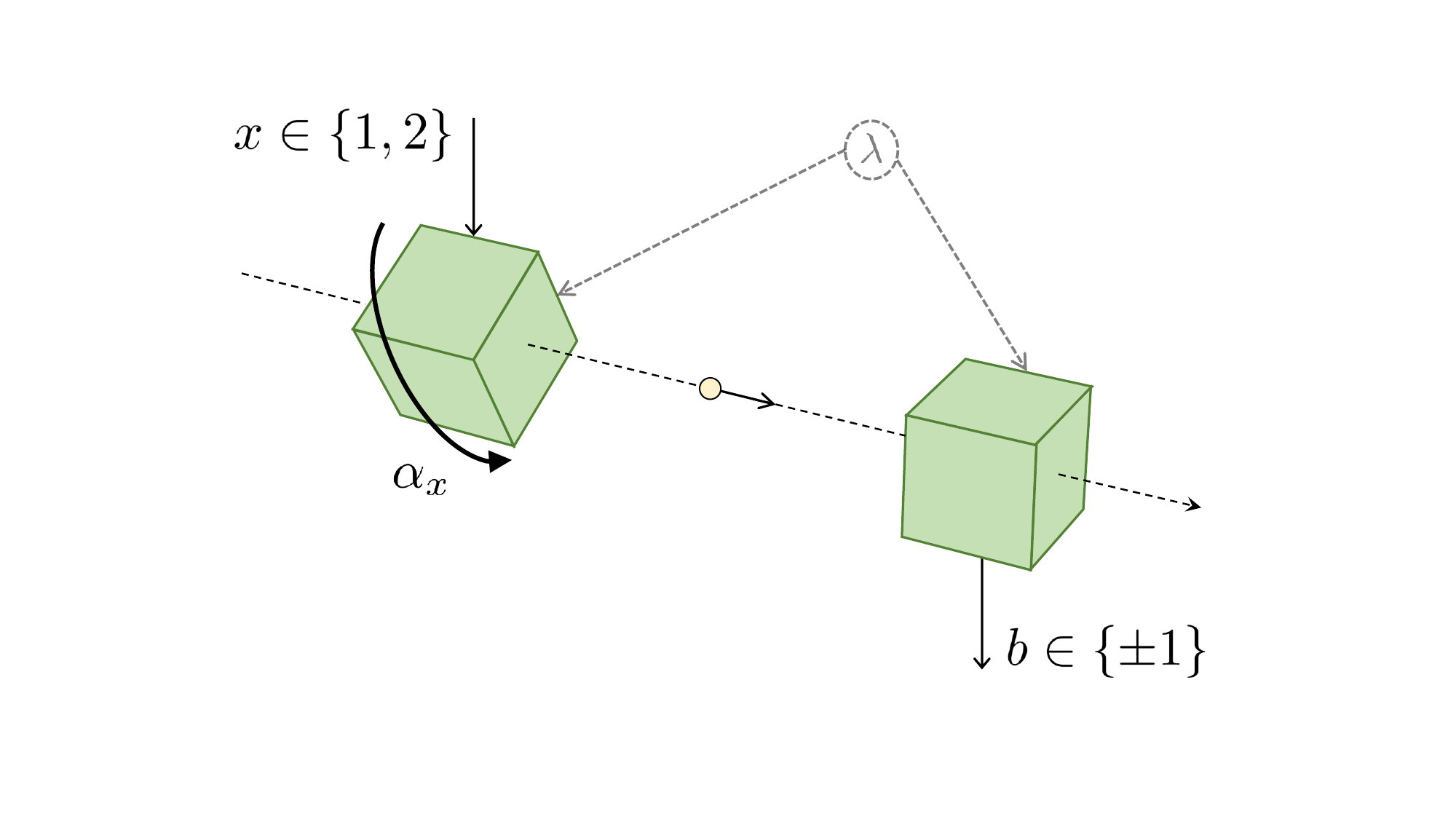}
\caption{Setup. A preparation device $P$ is rotated by an angle $\alpha_x\in\{0,\alpha\}$ relative to the measurement device $M$, and then a fixed but arbitrary state is generated. The state is then sent to $M$, where a measurement yields one of two outcomes $b\in\{\pm 1\}$. Some additional classical random variable $\lambda$, unknown to the experimenters, may have been preshared by the devices.} 
\label{figSetup}
\end{figure}

 We propose to use semi-device-independent (semi-DI) quantum information protocols as a theory-independent framework for studying this interrelation. We demonstrate that the symmetries of a scenario offer strict constraints on the probabilities admittable by nature; in particular, we show that for a simple prepare-and-measure scenario, rotational covariance alone is sufficient to recover the quantum bound on probabilities precisely if a theory-independent assumption on the transformation behavior of the transmitted systems is made.
DI and semi-DI approaches \cite{mayers1998quantum, barrett2005no, colbeck, acin2007device, gallego2010device, pawlowski2011semi, liang2011semi,branciard2012one,VanHimbeeck2017} treat devices in an experiment as ``black boxes'': no assumptions (or only very mild ones) are made about the inner workings of the devices, and the analysis relies on the observed input-output statistics alone. While Bell and other DI black-box scenarios have previously been used to study the foundations of quantum theory \cite{brunner2014bell,scarani2019bell}, here we suggest to ``put the boxes into space and time''.

Specifically, we consider the prepare-and-measure scenario sketched in Fig.~\ref{figSetup}. This setup can be used to generate random numbers~\cite{li2011semi,acin2016certified,ma2016quantum,VanHimbeeck2017,rusca2019self,tebyanian2021semi} that are secure even against eavesdroppers with additional classical information -- parametrized by a random variable $\lambda$, which may be preshared between the devices. We define a novel class of semi-DI protocols based on an assumption about how the transmitted system may respond to spatial rotations. Crucially, this semi-DI assumption does not rely on the validity of quantum theory, since it is representation-theoretic in nature and hence applies to all possible probabilistic theories. We show that the exact shape of the set of quantum correlations in this setup  can be recovered precisely from rotational covariance. Not only does our result entail the protocol's security even against post-quantum eavesdroppers, but it establishes an intimate connection between the symmetry structure of spacetime and quantum information theoretic probabilities.

\section{The setup} We consider a semi-DI random number generator similar to the one described in~\cite{VanHimbeeck2017,van2019correlations}, given by the prepare-and-measure scenario depicted in Fig.~\ref{figSetup}. The goal is to generate statistics $P(b|x)$ that certify that even external eavesdroppers with additional (classical) knowledge cannot predict $b$. As in standard DI quantum information, the security of semi-DI protocols does not require any assumptions on the inner-workings of the devices, but it requires some constraint on the physical system that is communicated between the devices \cite{gallego2010device,bowles2014certifying,VanHimbeeck2017}. This has often been implemented with a bound on the dimension of the Hilbert space of the transmitted system, restricting the communication to qubits or qutrits, as in \cite{gallego2010device,li2011semi,bowles2014certifying,brunner2008testing}, although this is arguably not very well-motivated for non-idealized physical scenarios. An alternative scheme was provided in \cite{VanHimbeeck2017,van2019correlations}, in which the mean value of some observable $H$ (such as the energy of the transmitted system) was constrained. This approach, however, requires trust in the valid characterization of the observable $H$, including, for example, the assumption of a specific gap above the ground state. In fact, the physical meaning of $H$ (say, as the generator of time translations) plays no direct role in their analysis. Whilst the formulation of~\cite{VanHimbeeck2017} is tailored towards its practical implementation, we have a more foundational focus on the relation to spatiotemporal quantities; thus, we propose modified semi-DI assumptions such that the protocol is grounded directly in properties of spacetime physics. Not only are these assumptions arguably physically well-motivated, but they can also be formulated without assuming the validity of quantum theory, as we will show below. This is in contrast to dimension bounds or assumptions on the expectation values of observables, which rely crucially on the validity of quantum theory.

\textbf{Assumptions.} Our main assumptions are based on the fact that the devices are embedded into spacetime, and spacetime admits spatial rotations. Specifically:
\begin{itemize}
\item[(i)] For any fixed value $\lambda$, the preparation device $P$ prepares some (in general unknown) state $\omega_\lambda$ of some (in general unspecified) associated physical system $S$. Moreover, for every fixed $\lambda$, it does so in a way that makes $S$ \textbf{uncorrelated (and in particular unentangled)} with other systems, including the measurement device.
\item[(ii)] The device $P$ can be \textbf{physically rotated} in space around some fixed, specified axis of rotation, without affecting other systems, and we can practically achieve this in the proposed protocol to good approximation. By the \textbf{covariance} of physical laws~\cite{Aloy}, it follows that rotations by some angle $\alpha$ must act as reversible transformations $T_\alpha$ on the unspecified state space of the associated system $S$, corresponding to a representation of ${\rm SO}(2)$.
\item[(iii)] The input $x$ is \textbf{statistically independent} of any potentially shared classical random variable $\lambda$.
\item[(iv)] Any possible adversary is only subject to \textbf{classical side information}.
\item[(v)] The representation $\alpha\mapsto T_\alpha$ of~(ii) has, when decomposed into real irreps as in~(\ref{gptrep}), highest representation label of at most $J$ (which we call the ``\textbf{generalized spin}'' of the system).
\end{itemize}
Some comments are in place. First, assumptions (i), (iii) and (iv) are standard assumptions in the semi-DI literature, even though (i) is usually not spelled out explicitly. They can be interpreted as assumptions on the causal structure of the setup. Assumption (ii) is our novel addition to the typical semi-DI scenarios and allows us to study the interrelation between observable correlations and rotational covariance. This preserves the ``black box'' paradigm of DI quantum information, whilst assuming that there are some trusted operations that we can carry out upon these boxes -- in particular, that we understand how to rotate it physically in space. In particular, this assumes that --- at least to good approximation --- we can act via rotations by some angle $\alpha$ on the preparation device, and that this action preserves the group structure: first rotating by $\alpha$ and then by $\beta$ will have the same effect on the device as rotating it by $\alpha+\beta$ (and, in particular, rotating by $\alpha$ and then by $-\alpha$ does nothing of relevance for the experiment to the device). Assumption (i) implies that we can describe $S$ as a generalized probabilistic theory system, where $T_\alpha$ of assumption (ii) must hence act as linear transformations~\cite{MuellerGarner}.

In order to have any interesting behavior at all, semi-DI frameworks must constrain the physical system $S$, which is (thought of as being) sent from the preparation to the measurement device. This role is here played by assumption (v), which we will explore and elaborate in more detail in the next section. In more standard semi-DI protocols, one often makes an assumption about the dimension of the state space (say, the Hilbert space dimension) or its information capacity. In our case, the uniquely relevant feature of $S$ is its reaction to spatial rotations, and notions of dimension or capacity do not play any direct role. This leaves essentially only one natural option: to bound the representation-theoretic properties of $T_\alpha$, as we do in (v).

Due to representation theory, finite dimension implies finite $J$. If we additionally assume the validity of quantum theory, and that the ${\rm SO}(2)$ rotations can be lifted to full ${\rm SO}(3)$ rotations in the scenario of interest, then finite Hilbert space dimension $d$ implies additionally $2J+1\leq d$, i.e.\ bounding $J$ is in some sense a weaker assumption than bounding $d$. Moreover, all known elementary particles have finite spin $J$, which gives us additional motivation to consider a bound on $J$ a natural assumption to be made. We acknowledge that the proposed bound on $J$ may not constitute any advantage for practical implementation over existing semi-DI protocols; rather, we emphasize our interest is in the foundational insights from the possibility of doing so.

\section{Quantum boxes} Let us start by describing the setup in terms of quantum theory, which we will later generalize to a theory-agnostic description. We consider two devices (Fig.~\ref{figSetup}). The first device prepares {some} quantum state $\rho_1$ and takes an input $x\in\{1,2\}$. The experimenter either does nothing to the device i.e. applies a rotation by an angle $0$, if $x=1$, or rotates it by an angle $\alpha$ {around a fixed axis} relative to the other device, if $x=2$. After the rotation, the preparation procedure is performed (say, by pressing a button on the preparation device), and the resulting physical system is sent to the second device. The second device produces an outcome $b\in\{\pm 1\}$, and is described by a POVM (positive operator-valued measure) $\{M_b\}$. Minimal assumptions are made about the devices \cite{pironio2016focus}, such that $\rho_1$ and $M_b$ are treated as unknown and may fluctuate according to some shared random variable $\lambda$.

While we allow such shared randomness (see Eq.~(\ref{eqQuantum}) below), we do not allow shared entanglement between preparation and measurement devices, which is a standard assumption in the semi-DI context~\cite{Pauwels}. Disallowing this, and demanding that the full preparation device is rotated, turns out to be necessary to obtain a resulting representation of ${\rm SO}(2)$; more on this below. For some physical intuition for what could go wrong if this was not assumed, consider a single spin-$1/2$ fermion in a superposition of two distant spatial modes, and the preparation device affecting only one of the two modes. In this case, a rotation of $2\pi$ could introduce a relative phase of $(-1)$, which could be detected by a measurement device implementing an interference experiment. This would be in direct contradiction to the results we establish below, where every $2\pi$-rotation must act trivially.

Well-known arguments (e.g.\ in~\cite[Sec.\ 13.1]{Wald}) imply that fundamental symmetries, such as the rotations by $\alpha$, must act as unitary transformations $U_\alpha$ on Hilbert space, furnishing a projective representation of the symmetry group (here ${\rm SO}(2)$). See Section~\ref{DiscussionRotBoxes} on rotation boxes (of which quantum boxes must be a subset) and~\cite[Sec.~III]{Aloy} for further detail. All finite-dimensional projective representations of ${\rm SO}(2)$ can be written in the form
\begin{equation}
{U}_{\alpha}=\bigoplus_{j=-J}^J  e^{ij\alpha}\mathbf{1}_{n_j},\label{repres}
\end{equation}
where $j$ runs over either integers or half-integers, $n_j\in \mathbb{N}_0$ denotes the multiplicity of $j$-th irrep, and $\mathbf{1}_{n_j}$ acts on the corresponding $n_j$-dimensional subspace. The assumption about the response of the system to rotations is implemented via an upper bound $J$ on the absolute value of these labels. For details see Appendix~\ref{app:projectiverepresentations}.

Fixing some $J\in\{0,\frac 1 2,1,\frac 3 2,\ldots\}$ introduces an assumption on the physical system that is sent from the preparation to the measurement device, namely, on its possible response to spatial rotations. This is what makes our scenario semi-DI, and what replaces the more common assumption on the Hilbert space dimension of the transmitted system. It is important to note that we do not fix the numbers $n_j$, thus allowing for the number of copies to vary, i.e.\ the Hilbert space dimension is not bounded by this. The number $J$ upper-bounds the spin or angular momentum quantum number associated with the physical system that is sent from the preparation to the measurement device. For example, if we have a single massive particle of spin $J$, then $U_\alpha=\exp(i \alpha Z_J)$, where $Z_J={\rm diag}(J,J-1,\ldots,-J)$ is the spin-$J$ representation of the Pauli $Z$ matrix, such that $n_j=1$ for $j=-J,-J+1,\ldots,J$. However, since the $n_j$ are arbitrary, the representation~(\ref{repres}) is allowed to be reducible, which includes the case of composite systems. For example, if the measurement probes helicity with a polarizer, then sending a single photon corresponds to a scenario with $J=1$, and $N$ photons to $J=N$~\cite{caban2003photon}. Moreover, every $J'>N$ will serve as a valid upper bound. 

As a concrete illustration, which is sketched in Fig.~\ref{figSetup2}, consider the preparation device $P$ to be constituted by a source that produces single photons on demand. Upon input $x\in\{1,2\}$, the source is either untouched (if $x=1$), or it is rotated by angle $\alpha$ around the beam axis (if $x=2$).  The measurement device $M$ is fixed in the optical axis, and could for instance be a polarizing beam splitter (PBS) with two single-photon detectors producing binary outcomes $b\in\{\pm 1\}$. Since the photon has helicity $J=1$, a linearly polarized photon will consequently have its polarization rotated by $\alpha$. On arbitrary polarization states, mechanically rotating the preparation device by $\alpha$ implements the unitary $U_\alpha={\rm diag}(e^{i\alpha},e^{-i\alpha})$ on the polarization subspace, corresponding to the spin-$1$ action assumed in Eq.~\eqref{repres} with $n_{-1}=n_1=1$ and all other $n_j=0$. Then, the detection probabilities are
\begin{equation}
    P(+1|\alpha)=\frac{1}{2}\left(1+\cos(2(\theta_0+\alpha))\right),
    \label{eqProbPhoton}
\end{equation}
where $\theta_0$ is the initial photon polarization from the source. As explained in Sec.~\ref{RBs}, this coincides with the trigonometric expressions we encounter in our rotation box analysis (c.f.~\eqref{SBs}). In a realistic scenario, one might encounter imperfections such as detector inefficiencies or multi-photon emissions which will lead to some rounds in which the $J$-bound fails. These imperfections are naturally taken into account by the relaxed set of correlations $\mathcal{Q}^\delta_{J,\alpha}$ we shall introduce later in Eq.~\eqref{relaxedquantumset}.

A mathematically equivalent probability rule will be obtained if we rotate the measurement device, rather than the preparation device, which can sometimes be more practical than rotating the single-photon source. Our group-theoretic analysis also implies that, for example, Bell experiments with pairs of photons where the settings are chosen in this way exhibit outcome probabilities that follow Malus' law, i.e.\ that are degree-two trigonometric polynomials in the local angles of the polarizers~\cite{Peres}.

\begin{figure}[t]
\centering 
\includegraphics[trim=0 50 10 0,clip, width=1.0\columnwidth]{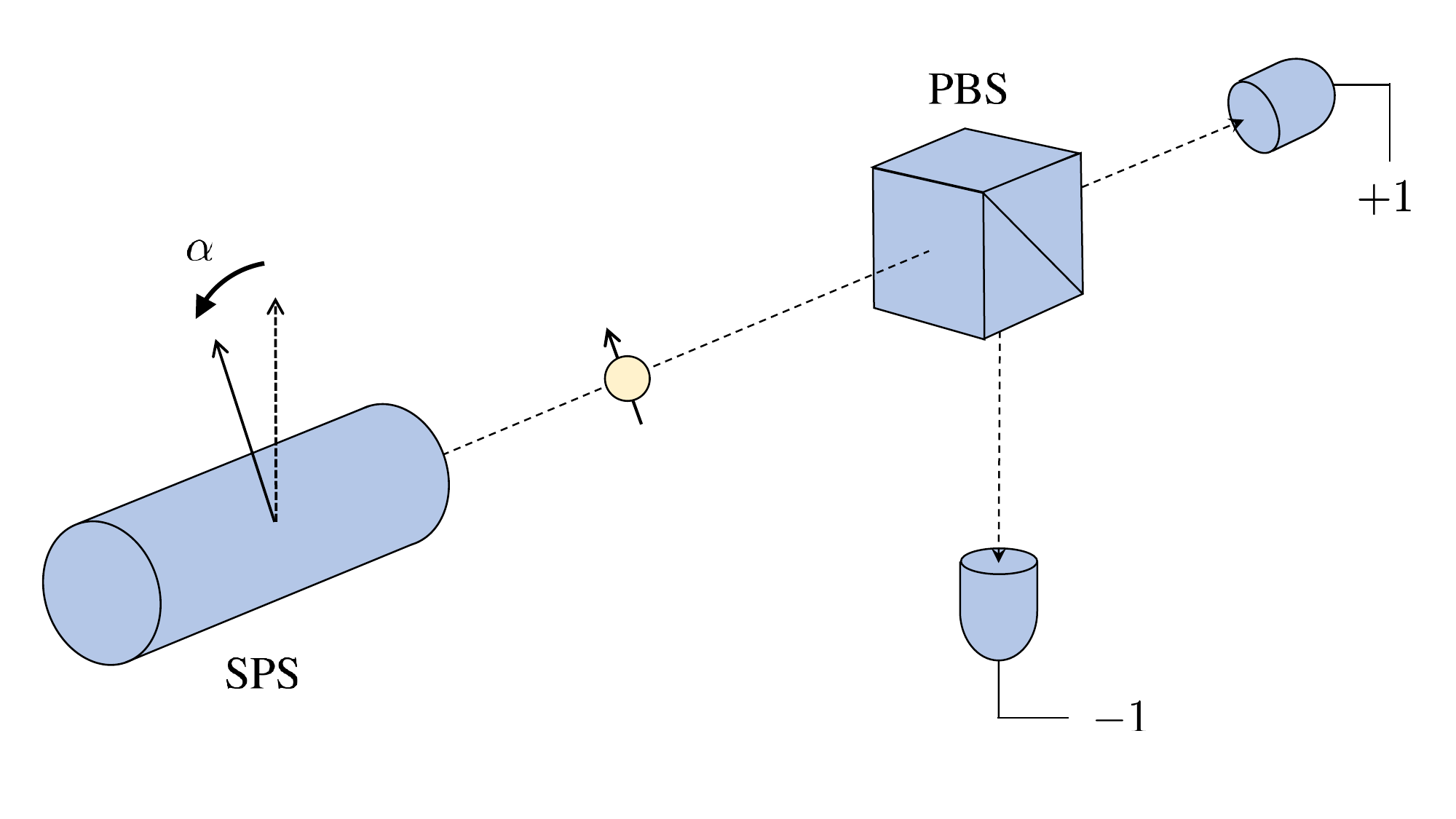}
\caption{Illustration of a possible way to implement our prepare-and-measure scenario. The preparation consists of a single-photon source (SPS) which is rotated at angles $0$ and $\alpha$ for inputs $x=1$ and $x=2$ respectively. This leaves the photon unchanged or rotates its polarization by $\alpha$. The photon is then measured by a polarizing beam splitter (PBS) with two single-photon detectors corresponding to the binary outcomes $b\in\{+1,-1\}$. This setup realizes the $J=1$ case of our framework, where the mechanical rotation implements the unitary $U_\alpha={\rm diag}(e^{i\alpha},e^{-i\alpha})$ to rotate the photon polarization.}

\label{figSetup2}
\end{figure}

Our mathematical formulation does not presuppose that the ${\rm SO}(2)$-representation must arise from spatial rotations: it could also arise for some other reason, e.g.\ from periodicity of time evolution. However, the special case that the preparation device is physically rotated in space is a paradigmatic instance in which the group symmetry is manifestly imposed from special covariance~\cite{Aloy}.

We are interested in the possible correlations between outcome $b$ and setting $x$ that can be obtained under an assumption on $J$ via Eq.~(\ref{repres}) in the quantum case. Let us for the moment assume that the initial state $\rho_1$ is a pure state $\rho_1=\ketbra{\phi_1}{\phi_1}$, then $|\phi_2\rangle=U_\alpha|\phi_1\rangle$ is prepared on input $x=2$, and the observable $M=M_1-M_{-1}$ characterizes the measurement procedure. If we consider all possible pure states $|\phi_1\rangle$ and observables $M$ arising from POVMs $\{M_b\}_{b\in\{-1,+1\}}$ in this way, then
\begin{equation}
  \cq_{J,\alpha}\coloneqq\{(E_1,E_2)|E_x=\langle \phi_x|M|\phi_x\rangle,\ket{\phi_2}=U_\alpha\ket{\phi_1}\}
\end{equation}
is the set of all possible correlations arising in our scenario, and $E_x=P(+1|x)-P(-1|x)$ is the expectation value of $M$, characterizing the bias of the outcome toward $\pm1$ for a given $x$. 
In~\cite{VanHimbeeck2017} it was shown that the set of correlations $(E_1,E_2)$ arising from arbitrary measurements on a pair of possible pure states $\phi_1,\phi_2$ is uniquely determined by their overlap $|\langle\phi_1|\phi_2\rangle|$, and the largest possible set of correlations arises when this overlap is minimal. In Appendix~\ref{boundingdelta}, we show that for our scenario the minimum overlap is
\begin{align}
\gamma=\min |\langle \phi_1|\phi_2\rangle|=\left\{\begin{matrix}
\cos(J\alpha)&\mbox{if}& |J\alpha|< \frac{\pi}{2}\\
0 &\mbox{if}&|J\alpha|\geq\frac{\pi}{2}
\end{matrix}\right..
\label{eqQuantum2}
\end{align}
Thus, according to~\cite{VanHimbeeck2017}, $\mathcal{Q}_{J,\alpha}$ is the set of correlations that satisfy the inequality
\begin{equation}
\frac{1}{2}\left(\sqrt{1+E_1}\sqrt{1+E_2}+\sqrt{1-E_1}\sqrt{1-E_2}\right) \geq \gamma.
\label{quantum_set}
\end{equation}

\begin{figure}[b]
\centering 
\includegraphics[width=.7\columnwidth]{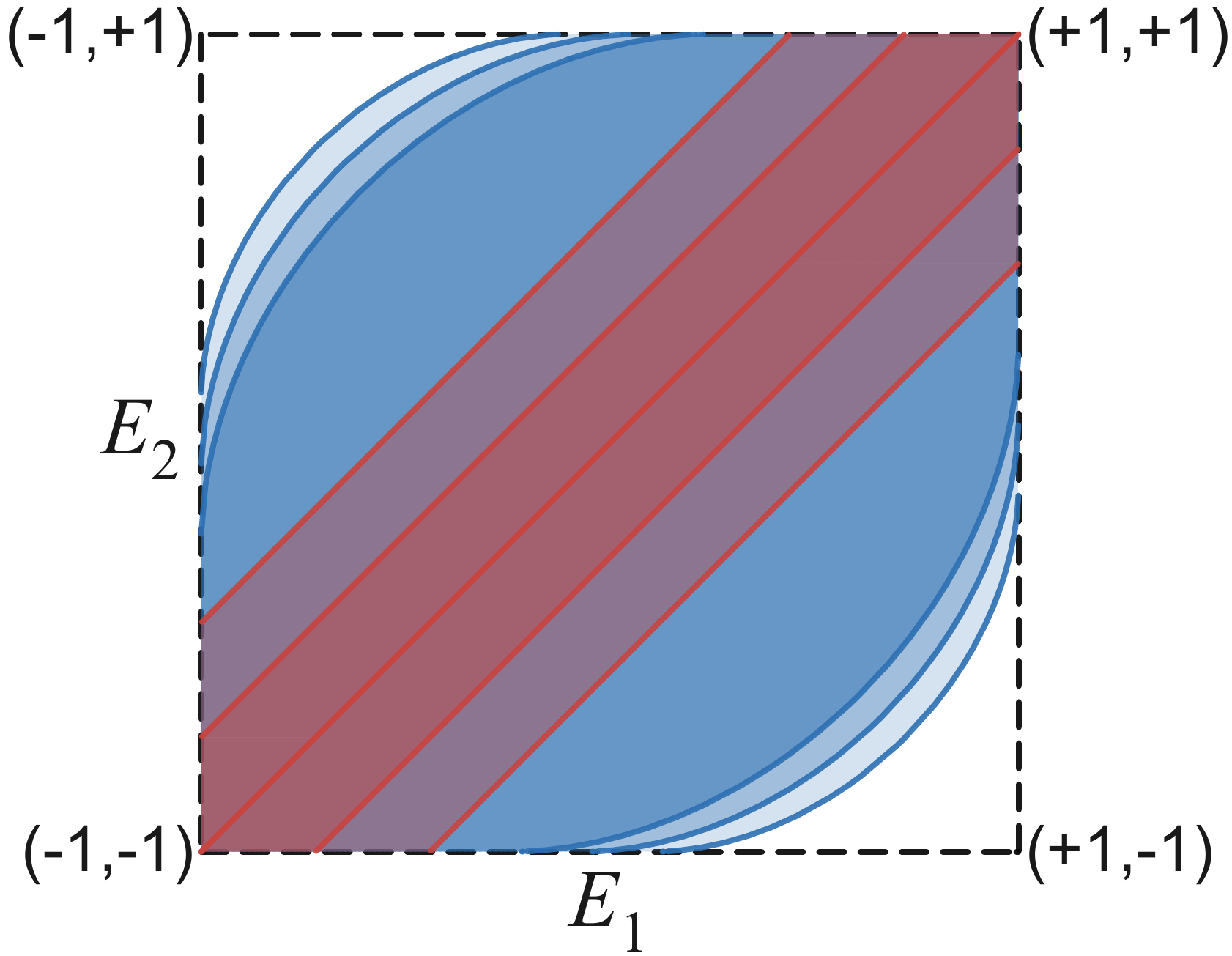}
\caption{The quantum sets $\cq_{J,\alpha}$ (dark blue) and the classical sets $\cc_{J,\alpha}$ (dark red; line $E_1=E_2$), and the quantum and classical relaxed sets $\cq^\varepsilon_{J,\alpha}$ and $\cc^\varepsilon_{J,\alpha}$ for $\varepsilon\in\{0.15,0.3\}$. We set $J=1$ and $\alpha=0.66$ in this figure.}
\label{plots}
\end{figure}

The set $\cq_{J,\alpha}$ grows with $J\alpha$ until $J\alpha=\pi/2$, at which point a $\ket{\phi_1}$ exists such that $|\phi_2\rangle=U_\alpha \ket{\phi_1}$ is orthogonal to it. If $|\phi_1\rangle$ and $|\phi_2\rangle$ are perfectly distinguishable, there exist (even deterministic) strategies to generate all conceivable correlations.

Anticipating the generation of private randomness as discussed further below, we define classical correlations as convex combinations of deterministic behaviors, i.e. $\mathbf{E}^\lambda:=(E_1,E_2)\in\{\pm 1\}\times\{\pm 1\}$,  that again satisfy the maximum spin $J$ bound: 
\begin{equation}
\cc_{J,\alpha}\!\coloneqq\!\{\mathbf{E}\!=\!\sum_\lambda p(\lambda) \mathbf{E}^\lambda\,\,|\,\,\mathbf{E}^\lambda\!\in\! \cq_{J,\alpha},\mathbf{E}^\lambda\in\{\pm 1\}\times\{\pm 1\}\},\label{classcor}
\end{equation}
where $\{p(\lambda)\}_\lambda$ is a probability distribution. If $J\alpha<\pi/2$, the states are not perfectly distinguishable, and so correlations are limited to $\mathbf{E}^\lambda=(\pm1,\pm1)$; alternatively, if $J\alpha\geq\pi/2$, the states can be perfectly distinguishable, and so $\mathbf{E}^\lambda=(\pm1,\mp1)$ are also possible correlations. Convex combinations of the former case gives the set $\cc_{J,\alpha}=\{(E_1,E_2)|-1\leq E_1=E_2 \leq 1 \}$, whilst the latter case gives all possible correlations.

So far only pure states have been considered. However, it turns out that this is sufficient, as the set of mixed state correlations, defined by
\begin{equation}
\!\! \cq'_{J,\alpha}\coloneqq\{(E_1,E_2)\,\,|\,\, E_x=\tr(M\rho_x),\rho_2=U_\alpha\rho_1 U^\dagger_\alpha\},
\end{equation} 
coincides precisely with $\cq_{J,\alpha}$. Clearly $\cq_{J,\alpha}\subseteq \cq_{J,\alpha}'$, and the converse $\cq_{J,\alpha}'\subseteq \cq_{J,\alpha}$ can be proven by purifying arbitrary states $\rho$ using an ancilla system, without adding any spin (for details, see~\ref{purestateeqmixedstateset}). Thus, the set $\cq_{J,\alpha}$ is convex, which means that it also describes scenarios where preparation $\rho_1$ and measurements $M_b$ fluctuate according to some shared random variable $\lambda$ distributed $\sim p(\lambda)$, i.e.
\begin{equation}
P(b|x)=P(b|\alpha_x)=\sum_\lambda p(\lambda){\rm tr}(M_b(\lambda)U_{\alpha_x}\rho_1(\lambda)U_{\alpha_x}^\dagger)
\label{eqQuantum}
\end{equation}
(where the input $x\in\{1,2\}$ is chosen independently from $\lambda$).
So far we have assumed that the constraint on the maximum spin $J$ holds exactly and in every run of the experiment. However, in a more realistic scenario, one may want to grant room for imperfections. This can be taken into account by trusting only that the constraint strictly holds with probability $1-\varepsilon$, with $0\leq \varepsilon< 1$, but for probability $\varepsilon$ the system might carry arbitrarily high spin. This leads to the relaxed quantum set 
\begin{equation}
\cq^\varepsilon_{J,\alpha}=(1-\varepsilon)\cq_{J,\alpha}+\varepsilon\,\left(\strut [-1,+1]\times[-1,+1]\right)
\label{relaxedquantumset}
\end{equation}  
depicted in Fig.~\ref{plots}. Similarly, if the failure of the assumption happens in such a way that it can be anticipated by a hypothetical adversary, then the set of classical strategies available to them is greater -- for which replacing $\mathcal{Q}$ by $\mathcal{C}$ in this expression defines the classical relaxed set $\mathcal{C}_{J,\alpha}^\varepsilon$. For a full characterization of the relaxed quantum and classical sets, see Appendix~\ref{relaxedsets}, where we also discuss types of experimental uncertainties for which these sets are physically relevant. For example, we show that for coherent states, where the photon number $n$ follows a Poisson distribution on Fock space, the relaxed quantum set $\cq^\varepsilon_{J,\alpha}$ with $\varepsilon=\mathcal{O}(\eta^{1/4})$ contains the relevant set of possible correlations, with $\eta:={\rm Prob}(n> N)$ giving the probability of a constraint on $J(=N)$ failing (which tends to zero exponentially in $N$).

\section{Generating private randomness} 
\label{HStar}
Adapting the results of~\cite{van2019correlations}, we can show that correlations in $\cq_{J,\alpha}$ outside of the classical set admit the generation of private randomness. Consider an eavesdropper Eve with classical (but no quantum) side information who tries to guess the value of $b$. Alice, who uses the setup of Fig.~\ref{figSetup} to generate private random outcomes $b$, will in general not have complete knowledge of all variables $\lambda\in\Lambda$ of relevance for the experiment, which is expressed in Eq.~(\ref{eqQuantum}) by $P(b|x)$ being the mixture $\sum_\lambda p(\lambda) P(b|x,\lambda)$. Eve, however, may have additional relevant information $\lambda$ (in addition to knowing the inputs $x$). It is straightforward to see that if $p(\lambda)>0$, then Eve cannot perfectly predict $b$ (i.e.\ $0<P(b|x,\lambda)<1$) if the observed correlations $\mathbf{E}$ are outside of the classical set $\mathcal{C}_{J,\alpha}$, as long as the semi-DI assumption is also satisfied for every given value of $\lambda$.

In order to generate private randomness in our scenario, Alice would like to guarantee that the conditional entropy $H(B|X,\Lambda)=-\sum_{b,x,\lambda} p(b,x,\lambda)\log_2 p(b|x,\lambda)$ is large, quantifying Eve's difficulty to predict $b$. Since $H(B|X,\Lambda)=\sum_\lambda p(\lambda) H(\mathbf{E}^\lambda)$ where $H(\mathbf{E}):=-\frac{1}{2}\sum_{b,x} \frac{1+b E_x}{2} \log \frac {1+b E_x}{2}$
, the amount of conditional entropy $H^\star$ that Alice can guarantee if she observes correlations $\mathbf{E}=(E_1,E_2)$, i.e.\ $H(B|X,\Lambda)\geq H^\star$, is determined by the optimization problem
\begin{eqnarray}
H^\star&=&\min_{\{p(\lambda),\mathbf{E}^\lambda\}} \sum_\lambda p(\lambda) H(\mathbf{E}^\lambda)\nonumber\\
&& \mbox{subject to } \sum_{\lambda:\mathbf{E}^\lambda\in\cq^{\omega}_{J,\alpha}}p(\lambda)\geq1-\varepsilon\nonumber\\
&& \mbox{and }\sum_\lambda p(\lambda)\mathbf{E}^\lambda=\mathbf{E}.
\label{optimisation}
\end{eqnarray}
That is, $H^\star$ tells us the number of certified bits of private randomness against Eve, under the assumption that the transmitted systems have spin at most $J$ --- or, rather, when this assumption holds approximately (up to some $\omega$), with high probability $(1-\varepsilon)$. This quantity is non-zero, $H^\star\equiv H^\star_{\varepsilon,\omega,\alpha}>0$, whenever the observed correlations are outside of the relaxed classical set, $\mathbf{E}\not\in\mathcal{C}^{\varepsilon}_{J,\alpha}$. For $\varepsilon=\omega=0$, this optimization problem is equivalent to the one in~\cite[Sec.\ 3.2]{van2019correlations} for the case that there is, in the terminology of that paper, no max-average assumption (see Appendix~\ref{pironiolink}). For determining the numerical value of $H^\star_{0,0,\alpha}$, we thus refer the reader to~\cite{van2019correlations}, where it is shown that $H^\star$ is the limit $\lim_{k\to\infty}H_k^\star$ of lower bounds $H^\star_k\leq H^\star$ that can be obtained via semidefinite programming. Furthermore, as we show in Appendix~\ref{boundinghstar}, we have a robustness bound for $H^\star_{\varepsilon,\omega,\alpha}$, which reads
\begin{align}
   H_{0,0,\alpha}^\star &\geq H_{\varepsilon,\omega,\alpha}^\star \nonumber\\
&\geq H^\star_{0,0,\alpha+c(\varepsilon+\omega)}+\log(1-\varepsilon)-\frac{\varepsilon\log(2/\varepsilon)}{1-\varepsilon},
\end{align}
{where $c=2\cot(J\alpha)/J$. Thus, for small $\varepsilon,\omega>0$, the number of certified random bits can still be well approximated by using the results of~\cite{van2019correlations} for $\varepsilon=\omega=0$.}

In this paper, we restrict our analysis to the characterization of the achievable correlations $\mathbf{E}$ and their possible decompositions, depending on the outcome probabilities $P(b|\alpha_x)$ for a single run of the experiment. The quantity $H^\star$ above quantifies the unpredictability of the outcome of a single round under our theory-independent assumption of a spin-$J$ bound. Implementing a concrete randomness generation protocol that aims at producing a large number of random bits with some number $n$ of rounds will in general need a more elaborate analysis, which is beyond the scope of this paper. We direct the interested reader towards Ref.~\cite{van2019correlations} for more precise details on an implementation of the analogous protocol~\cite{VanHimbeeck2017}. 

In particular, we expect that one will not need to assume that all runs of the experiment lead to independently and identically distributed settings and outcomes (i.i.d): it is now well-known that DI and semi-DI randomness generation and expansion protocols can avoid the i.i.d.\ assumption by using the martingale framework and the Azuma–Hoeffding inequality (see e.g.,~\cite{grimmett2020probability,pironio2013security,zhang2018certifying, knill2020generation, van2019correlations}). Here we follow the prepare-and-measure framework of~\cite{van2019correlations}, which builds on this martingale-based approach and is thus secure without requiring i.i.d.\ assumptions, allowing the devices to have memory or to vary their behaviour between rounds, while still requiring that the inputs $x$ are chosen independently from the past and in an identically distributed way. In particular, the underlying assumptions for an actual implementation of the protocol will be those of~\cite[Subsection 4.1]{van2019correlations}, where the ``max-average assumption'' is dropped, and the ``max-peak assumption'' is replaced by the assumption that $\mathbf{E}_i(W_i,\Lambda)\in\mathcal{Q}_{J,\alpha}$ for all $1\leq i\leq n$, i.e.\ that the correlation $\mathbf{E}_i$ of every $i$-th run, which may depend on the variables $W_i$ encoding the inputs and outputs of the previous runs and the shared randomness $\Lambda$, is a valid spin-$J$ correlation.

\section{Rotation boxes} \label{RBs}
We now drop the assumption that quantum theory holds, and consider the most general form {the probabilities $P(b|\alpha)$ may take that is} consistent with the rotational symmetry of the setup while implementing our spin bound. As discussed later in more detail in Section~\ref{DiscussionRotBoxes}, to every prepare-and-measure scenario, we can associate a convex set of states in a real vector space (in quantum theory, these are the density operators in the space of Hermitian matrices). Covariance implies~\cite{Wald} that spacetime symmetries (and hence their subgroup ${\rm SO}(2)$) have linear representations on this space, necessarily characterized by a maximum charge (``spin'') $J$. As shown later in Section~\ref{DiscussionRotBoxes}, it follows that
\begin{equation}
P(b|\alpha)=\sum_{k=0}^{2J} \left(c_k^{(b)} \cos(k\alpha)+s_k^{(b)}\sin(k\alpha)\right),
\label{SBs}
\end{equation}
with suitable coefficients $c_k^{(b)}$, $s_k^{(b)}$. In quantum theory in particular, if $P(b|\alpha)$ satisfies Eq.~(\ref{eqQuantum}) and $U_\alpha$ satisfies the semi-DI assumption Eq.~(\ref{repres}), then it is of the form~(\ref{SBs}). Conversely, we show~\cite{Aloy} that every ``rotation box'' $P(b|\alpha)$ of the form~(\ref{SBs}), yielding valid outcome probabilities between $0$ and $1$, comes from a representation of ${\rm SO}(2)$ on some (in general non-quantum) probabilistic theory with  maximal spin $J$. Since $P(+1|\alpha)+P(-1|\alpha)=1$, the set of possible spin-$J$ quantum and rotation boxes respectively can be denoted by
\begin{eqnarray}
\cq_J&:=&\{\alpha\mapsto P(+1|\alpha)\,\,|\,\, P(b|\alpha)\mbox{ of form }(\ref{eqQuantum})\},\\
\mathcal{R}_J&:=&\{\alpha\mapsto P(+1|\alpha)\,\,|\,\, P(b|\alpha)\mbox{ of form }(\ref{SBs})\},
\end{eqnarray}
where, for $\cq_J$, we assume that $U_\alpha$ is of the form~(\ref{repres}). We have just seen that $\cq_J\subseteq \mathcal{R}_J$. Trivially, $\cq_0=\mathcal{R}_0$ is the set of constant probability functions, and it can be shown that $\cq_{1/2}=\mathcal{R}_{1/2}$ (see Appendix~\ref{proofq1/2=r1/2}). However, in~\cite{Aloy}, we show that $\cq_J\subsetneq \mathcal{R}_J$, i.e.\ that there are more general ways to respond to spatial rotations than allowed by quantum theory, if $J\geq 3/2$.

\subsection{The physical meaning of rotation boxes}
\label{DiscussionRotBoxes}

Natural extensions of quantum theory are often phrased within the language of generalized probabilistic theories (GPTs)~~\cite{Hardy1,Barrett,Mueller,Plavala}. In particular, all possible consistent statistical descriptions of prepare-and-measure scenarios can be described by a GPT system~\cite{MuellerGarner}. A GPT system $A$ consists of vector space $V_A$ (here taken to be finite-dimensional), a state space $\Omega_A\subset V_A$, an effect space $\mathcal{E}_A\subset V_A^*$ and a set of transformations $\mathcal{T}_A\subset\mathcal{L}(V_A)$. In summary, preparation procedures are described by states $\omega\in\Omega_A$, outcomes of measurements by effects $e\in\mathcal{E}_A$, and transformations $T$ by linear maps on $T\in\mathcal{T}_A$ such that $(e,T\omega)$ is the probability to obtain the corresponding outcome, following the preparation and transformation procedures. Quantum systems over $\mathbb{C}^n$ are special cases of GPT systems, with $V_A$ the space of Hermitian complex $n\times n$ matrices, $\Omega_A$ the set of density matrices, $\mathcal{E}_A$ the set of POVM elements, and $\mathcal{T}_A$ the completely positive trace-preserving maps.

Now, assuming the rotational covariance of physics, similar arguments as in the quantum case~\cite{Wald} imply that there must be a representation of ${\rm SO}(2)$ on the state space. First considering QT, the most general representation acting on a Hilbert space is given by Eq.~(\ref{repres}), which induces a representation on the real vector space of Hermitian matrices (and therefore also on the state space of density matrices) $\mathcal{U}_\alpha(\rho):=U_\alpha \rho U_\alpha^\dagger$. In a suitable basis, the superoperator $\mathcal{U}_\alpha$ has the block matrix form~\cite{Aloy}
\begin{equation}
    \mathcal{U}_\alpha=\mathbf{1}\oplus\bigoplus^{2J}_{k=1}\mathbf{1}_{m_k}\otimes\begin{pmatrix}
        \cos{(k\alpha)} & -\sin{(k\alpha)} \\
        \sin{(k\alpha)} & \cos{(k\alpha)}
    \end{pmatrix},\label{gptrep}
\end{equation}
where $m_k\in \mathbb{N}_0$ denotes the multiplicity of the $k$th real irrep.
This must be true because every representation of ${\rm SO}(2)$ on a finite-dimensional real vector space is of this form. To say that a quantum system carries a representation of ${\rm SO}(2)$ of this form, with $m_{2J}\neq 0$, is equivalent to saying that the all outcome probabilities ${\rm tr}(M U_\alpha \rho U_\alpha^\dagger)$ are trigonometric polynomials in $\alpha$ (i.e.\ of the form~(\ref{SBs})), and the maximal degree over all states $\rho$ and POVM elements $M$ is equals $2J$. These are two equivalent ways of saying that we have a quantum spin-$J$ system.

We can now drop the assumption that quantum theory holds, and say that a GPT system is a spin-$J$ system if it carries a representation of ${\rm SO}(2)$ as transformations $T_\alpha$ that can be decomposed as in~(\ref{gptrep}). Similarly as in quantum theory, this is equivalent to saying that all outcome probabilities $(e,T_\alpha\omega)$ are trigonometric polynomials in $\alpha$, of maximal degree $2J$. Moreover, all  rotation box probabilities $P(b|\alpha)$ of Eq.~(\ref{SBs}) can be seen as arising from some spin-$J$ GPT system~\cite{Aloy}.

The ``post-quantum number'' $J$ behaves in similar ways as its quantum counterpart. For example, placing two independent rotation boxes $P_1$ and $P_2$ side by side gives a resulting box $P(b_1,b_2|\alpha):=P_1(b_1|\alpha)P_2(b_2|\alpha)$ with $J=J_1+J_2$. This is in line with particle physics intuition by hinting at $J$ being related to the number of constituents or ``size'' of the physical system.

\subsection{Agreement of correlation sets} 
If we consider only two possible inputs, $x\in\{1,2\}$, with corresponding rotations by $0$ and $\alpha$ (which is a fixed angle), the resulting set of rotation box correlations is
\begin{eqnarray}
\!\!\crr_{J,\alpha} \!\! &\coloneqq& \!\! \{(E_1,E_2)|E_1\!=\!P(+1|0)\!-\!P(-1|0),\nonumber\\
&& E_2\!=\!P(+1|\alpha)\!-\!P(-1|\alpha),P\,\mbox{as in}\,(\ref{SBs})\}.
\end{eqnarray}
Obviously $\mathcal{Q}_{J,\alpha}\subseteq\mathcal{R}_{J,\alpha}$, but we can say more:
\begin{theorem}
For every fixed angle $\alpha$, the quantum set coincides with the rotation box set, i.e. $\cq_{J,\alpha}=\mathcal{R}_{J,\alpha}$.
\label{weakconj}
\end{theorem}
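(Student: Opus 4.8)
The plan is to prove the nontrivial inclusion $\crr_{J,\alpha}\subseteq\cq_{J,\alpha}$, which together with the obvious $\cq_{J,\alpha}\subseteq\crr_{J,\alpha}$ gives the theorem. By the identity $\cq_{J,\alpha}=\widetilde\cq_{J,\alpha}$ recorded above, $\cq_{J,\alpha}$ is exactly the set of $(E_1,E_2)\in[-1,+1]^2$ satisfying~(\ref{quantumset}) with $\gamma$ as in~(\ref{eqQuantum2}), so it suffices to verify that single inequality for every point of $\crr_{J,\alpha}$. Fix such a point and let $f(\beta):=P(+1|\beta)$ for the underlying rotation box~(\ref{SBs}); then $f$ is a real trigonometric polynomial of degree at most $2J$ with $0\le f\le 1$ (from $P(\pm1|\beta)\ge0$ and $P(+1|\beta)+P(-1|\beta)=1$), and $E_1=2f(0)-1$, $E_2=2f(\alpha)-1$, so in particular $(E_1,E_2)\in[-1,+1]^2$. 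Since $1+E_i=2f(\cdot)$ and $1-E_i=2\bigl(1-f(\cdot)\bigr)$, the left-hand side of~(\ref{quantumset}) equals $F(\alpha)$ for
\[
F(\beta):=\sqrt{f(0)f(\beta)}+\sqrt{\bigl(1-f(0)\bigr)\bigl(1-f(\beta)\bigr)} .
\]
So the theorem reduces to the analytic claim that $F(\alpha)\ge\gamma$ for every degree-$\le 2J$ trigonometric polynomial $f$ valued in $[0,1]$.

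The key tool I would use is the classical Bernstein--Szeg\H{o} inequality (a sharpening of Bernstein's inequality; see e.g.\ Borwein--Erd\'elyi): if $T$ is a real trigonometric polynomial of degree at most $n$ with $\|T\|_\infty\le 1$, then $T'(\beta)^2\le n^2\bigl(1-T(\beta)^2\bigr)$ for all $\beta$. Applied to $T:=2f-1$ this gives the pointwise bound $|f'(\beta)|\le 2J\sqrt{f(\beta)\bigl(1-f(\beta)\bigr)}$. To avoid the non-differentiability of $F$ at zeros of $f(1-f)$, I would first prove the claim for $f_\eta:=(1-\eta)f+\tfrac{\eta}{2}$, which has degree $\le 2J$, takes values in $(0,1)$, and again obeys the Bernstein--Szeg\H{o} bound; the general case then follows by letting $\eta\downarrow 0$, as $F_\eta(\alpha)\to F(\alpha)$.

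For $f$ strictly between $0$ and $1$, put $\cos a:=\sqrt{f(0)}$ and $\cos b(\beta):=\sqrt{f(\beta)}$ with $a,b(\beta)\in(0,\tfrac{\pi}{2})$, so that $F(\beta)=\cos\!\bigl(a-b(\beta)\bigr)$ and $\sqrt{1-F(\beta)^2}=\bigl|\sqrt{f(0)}\sqrt{1-f(\beta)}-\sqrt{1-f(0)}\sqrt{f(\beta)}\bigr|$. Differentiating $F$ and inserting the bound on $f'$ then yields the differential inequality $|F'(\beta)|\le J\sqrt{1-F(\beta)^2}$, i.e.\ $\bigl|(\arccos F)'(\beta)\bigr|\le J$. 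Since $F(0)=1$, integrating from $0$ to $\alpha$ gives $\arccos F(\alpha)\le J|\alpha|$, hence $F(\alpha)\ge\cos(J\alpha)$ when $|J\alpha|<\tfrac{\pi}{2}$; and if $|J\alpha|\ge\tfrac{\pi}{2}$ then $\gamma=0\le F(\alpha)$ trivially. In all cases $F(\alpha)\ge\gamma$, so $(E_1,E_2)$ satisfies~(\ref{quantumset}) and lies in $\cq_{J,\alpha}$, completing the argument.

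I expect the real obstacle to be conceptual: recognising that the two defining features of a rotation box --- bounded Fourier degree $2J$ and positivity $0\le P(+1|\cdot)\le 1$ --- are exactly the hypotheses of the Bernstein--Szeg\H{o} inequality, whose resulting differential inequality for $F$ integrates to precisely the correlation bound~(\ref{quantumset})--(\ref{eqQuantum2}). The regularity of $F$ at the zeros of $f(1-f)$ is a minor point handled by the $\eta$-regularisation, but one should also check (or reprove) the sharp Bernstein--Szeg\H{o} estimate, since the ordinary Bernstein bound $\|T'\|_\infty\le n\|T\|_\infty$ is not strong enough.
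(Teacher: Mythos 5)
Your proof is correct and follows essentially the same route as the paper's: the key ingredient in both is the Bernstein--Szeg\H{o} inequality $T'(x)^2+(2J)^2T(x)^2\le(2J)^2$ for the degree-$2J$ trigonometric polynomial $T=2P(+1|\cdot)-1$, integrated as a differential inequality and then combined with $\mathcal{Q}_{J,\alpha}=\widetilde{\mathcal{Q}}_{J,\alpha}$, and your substitution $\sqrt{f}=\cos b$ is the same change of variables as the paper's $\arcsin E$ (since $\arcsin E=\tfrac{\pi}{2}-2b$). Your $\eta$-regularisation at the zeros of $f(1-f)$ is a small extra care that the paper's version glosses over, but otherwise the arguments coincide.
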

\begin{proof}
Clearly $\mathcal{Q}_{J,\alpha}\subseteq \mathcal{R}_{J,\alpha}$. We use~\cite[Chapter 4, Thm.\ 1.1]{DeVore}: If $T$ is a trigonometric polynomial of degree $n$ with $-1\leq T(x)\leq 1\; \forall x$, then
\begin{equation}
T'(x)^2 + n^2T(x)^2\leq n^2 .\label{trigineq}
\end{equation}
Suppose that $P$ defines some spin-$J$ rotation box correlation, i.e.\ $P(+|\alpha)$ is a trigonometric polynomial of degree at most $2J$, taking values in the interval $[0,1]$ for all $\alpha$. Define $T(\alpha)\coloneqq P(+|\alpha)-P(-|\alpha)=2P(+|\alpha)-1$, which is a trigonometric polynomial of degree $n=2J$ with $-1\leq T(\alpha) \leq 1$. Rewrite~(\ref{trigineq}) as $T'(x)\leq 2J\sqrt{1-T(x)^2}$ and set $E_x:=T(0)$ and $E_{x'}:=T(\alpha_\varepsilon)$, then
\begin{align}
    \alpha_\varepsilon = \int_0^{\alpha_\varepsilon}d\alpha &\geq \int_0 ^{\alpha_\varepsilon} \frac{T'(\alpha)d\alpha}{2J\sqrt{1-T(\alpha)^2}}\nonumber\\    
&=\frac{1}{2J}\int_{E_{x}}^{E_{x'}}\frac{dy}{\sqrt{1-y^2}}\nonumber\\
&= \frac{1}{2J}\left(\arcsin E_{x'}-\arcsin E_x\right),
\end{align}
where we have substituted $y=T(\alpha)$. It follows that
\begin{equation}
    \frac 1 2 |\arcsin E_2-\arcsin E_1|\leq J\alpha.
    \label{rotation_box_set}
\end{equation}
For $J\alpha\geq\pi/2$, the set $\mathcal{R}_{J,\alpha}$ contains all possible correlations, as in the quantum case. For $J\alpha<\pi/2$, taking the cosine of both sides of~(\ref{rotation_box_set}) reproduces, after some elementary manipulations (see Appendix~\ref{proofquantumeqrotationset}), precisely the conditions of the quantum set, as in~(\ref{quantum_set}) and~(\ref{eqQuantum2}), hence $(E_1,E_2)\in\mathcal{Q}_{J,\alpha}$, and so $\mathcal{R}_{J,\alpha}\subseteq \mathcal{Q}_{J,\alpha}$.
\end{proof}

This shows that the set of quantum correlations in our setup can be understood as a consequence of the interplay of probabilities and spatial symmetries, without assuming the validity of quantum theory. Notably, \cite{tavakoli2022informationally} also identify a general polytope $\mathcal{G}$ that characterizes the set of correlations under an abstract informational restriction~\cite{tavakoli2020informationally} when no assumption is made on the underlying physical theory, and in the simplest case of two inputs, this polytope agrees with the set of achievable quantum correlations. Here, however, we show that a physically well-motivated assumption reproduces the curved boundary of the set of quantum correlations exactly, for all $J$. Moreover, Theorem 1 implies that the amount of certifiable randomness $H^\star$ remains correct even if the validity of quantum theory is not assumed. To the best of our knowledge, there has not been a description of a semi-DI prepare-and-measure scenario with this property in earlier work.

\section{Post-quantum security} 
The equality $\mathcal{R}_{J,\alpha}=\cq_{J,\alpha}$ implies that the semi-DI protocol above is secure against post-quantum eavesdroppers. While Alice observes quantum correlations $\bm{E}\in\mathcal{Q}_{J,\alpha}$, i.e.\ of the form~(\ref{eqQuantum}), it is conceivable that these are actually mixtures of beyond-quantum rotation boxes $\bm{E}^\lambda\in \mathcal{R}_{J,\alpha}^\omega$ such that $\bm{E}=\sum_\lambda p(\lambda)\bm{E}^\lambda$, where
Eve may have access to beyond-quantum physics and know the value of $\lambda$. To see how many bits of private randomness $H^\star$ Alice can guarantee against Eve in this case, the optimization problem~(\ref{optimisation}) has to be altered by relaxing the condition on $\mathbf{E}^\lambda\in\mathcal{Q}_{J,\alpha}^\omega$ to $\mathbf{E}^\lambda\in\mathcal{R}_{J,\alpha}^\omega$, i.e.\ by only demanding that every transmitted system is, up to probability $\varepsilon$, approximately a (not necessarily quantum) rotation box of maximal spin $J$. However, since $\mathcal{R}_{J,\alpha}^\omega=\mathcal{Q}_{J,\alpha}^\omega$, the optimization problem and hence $H^\star$ are unaffected by this.

We emphasize that our aim here is not to propose a practical implementation, but to show that the essential features of the randomness protocol framework of~\cite{van2019correlations} can be derived directly from covariance constraints via the spin-$J$ bound.

\section{Conclusions} 
We have introduced a theory-independent and semi-device-independent scenario for generating random numbers based on the response of physical systems to spatial rotations. This allowed us to recover the exact set of quantum correlations of the setup without assuming quantum mechanics, merely from a semi-DI assumption on a generalized notion of spin of the transmitted system. From a fundamental point of view, our results demonstrate that the symmetries of space and time enforce important features of quantum theory in some scenarios. From a more pragmatic point of view, they allow us to certify random numbers from physically better motivated assumptions than the usual dimension bounds, and the amount of secure random bits is independent of the validity of quantum physics.

Our results demonstrate that semi-DI scenarios can shed light on the foundations of physics. In particular, they allow us to study the question of how the structure of space and time constrains the probabilities and correlations of preparation procedures and measurement outcomes. Clearly, our work has only addressed the simplest case of this problem. What else can we learn by studying more general scenarios beyond prepare-and-measure, and more general symmetry groups such as the full rotation group ${\rm SO}(3)$, time translations, or the Lorentz group? In addition, our work suggests an interesting foundational question: is quantum theory perhaps the only probabilistic theory that ``fits into space and time'' for all scenarios? A positive answer to this question would significantly improve our understanding of the logical architecture of our world. On the other hand, a negative answer could inform experimental tests of quantum theory, by telling us where there might be elbow room for beyond-quantum physics consistent with spacetime as we know it.\\

\section*{Acknowledgments} 
We are grateful to Valerio Scarani and Armin Tavakoli for helpful discussions. We acknowledge support from the Austrian Science Fund (FWF) via project P 33730-N. This research was supported in part by Perimeter Institute for Theoretical Physics. Research at Perimeter Institute is supported by the Government of Canada through the Department of Innovation, Science, and Economic Development, and by the Province of Ontario through the Ministry of Colleges and Universities.

\onecolumn 
\bigskip
\appendix

\section*{Appendix}

\section{Bounding the overlap $\gamma$}
\label{boundingdelta}
We have to determine $\gamma:=\min|\langle\phi|U_\alpha|\phi\rangle|$, where the minimization is over all representations of the form~(\ref{repres}) and over all pure states $|\phi\rangle$ in all finite-dimensional Hilbert spaces supporting such representations. Here we will show that
\begin{align}
\gamma=\left\{
   \begin{array}{cl}
      \cos(J\alpha) & \mbox{if }|J\alpha|<\frac\pi 2 \\
      0 & \mbox{if }|J\alpha|\geq\frac \pi 2
   \end{array}
\right\}.
\end{align}
That this bound is attained, i.e.\ that the right-hand side upper-bounds $\gamma$, can be seen by considering the states $(|j\rangle+e^{i\theta}|-j\rangle)/\sqrt{2}$ on $\mathbb{C}^{2J+1}$ carrying the spin-$J$ irrep of ${\rm SO}(2)$. Every normalized state $|\phi\rangle$ can be written in the form $|\phi\rangle=\sum_{j=-J}^J \phi_j|\psi_j\rangle$, where $\phi_j|\psi_j\rangle:=\mathbf{1}_{n_j} |\phi\rangle$, $\sum_j |\phi_j|^2=1$, and the $|\psi_j\rangle$ are normalized and satisfy $U_\alpha |\psi_j\rangle=e^{ij\alpha}|\psi_j\rangle$.

To prove the bound above, we evaluate the inner product with the rotated state $U_\alpha\ket{\phi}=\sum_k e^{ij\alpha}\phi_j\ket{\psi_j}$. Using that $|z|\geq{\rm Re}(z)$ for $z\in\mathbb{C}$, we obtain
\begin{align}
    \abs{\bra{\phi}U_\alpha\ket{\phi}}&=\abs{\sum_j|\phi_j|^2e^{ij\alpha}}\nonumber\\
    &=\abs{ |\phi_0|^2+\sum_{j>0}\big(|\phi_{-j}|^2e^{-ij\alpha}+|\phi_{j}|^2e^{ij\alpha}\big)}\nonumber\\
    &\geq|\phi_0|^2+\sum_{j>0}\cos(j\alpha)\big(|\phi_{-j}|^2+|\phi_{j}|^2\big).
\end{align}
There is a factor of $\cos(j\alpha)$ in front of all coefficients, which is smaller for larger $j$, i.e.\ $\cos(j_1 \alpha)<\cos(j_2 \alpha)$ for $j_1>j_2$ since $|\alpha|\leq\pi/(2 J)$. Therefore the final expression is minimized when the coefficients are weighted entirely by the maximum $j$ terms, i.e.\ for $|\phi_{-J}|^2+|\phi_J|^2=1$, and so $\gamma\geq\cos(J\alpha)$.

\section{Projective Representations of ${\rm SO}(2)$}
\label{app:projectiverepresentations}

In this section, we analyze the representations of ${\rm SO}(2)$. The authors do not claim originality in finding all projective representations of SO$(2)$, but they were unable to find a source presenting them in the desired form. The first part of this appendix is particularly technical and intended mainly for interested readers. Those who prefer may skip directly to (\ref{eq:projectiverepresentations}), which provides the most general form of projective representation, or to (\ref{finalformrepresent}), which gives the most general representation required in the main text. 

We will restrict our discussion to representations on finite-dimensional Hilbert spaces. 
In general, we consider projective unitary representation of ${\rm SO}(2)$. In finite dimension it is always possible to deprojectivize unitary representations of the connected Lie group $\cg$ by passing to unitary representations of its universal cover $\hat{\cg}$, and moreover, every projective unitary representation of $\mathcal{G}$ stems from a ordinary unitary representation of $\hat{\mathcal{G}}$ \cite{hall2013}. It can be checked that the universal cover of ${\rm SO}(2)$ is the translation group $(\mathbb{R},+)$ \cite{hatcher2002, hall2015}. Using Stone's theorem \cite{stone1932}, we find that the projective representations of ${\rm SO}(2)$ must be (up to a global phase) of the form
\begin{equation}
    U_\alpha=\exp(i\,\mbox{diag}(\begin{matrix}
        j_1 ,& \ldots ,&j_N
    \end{matrix})\alpha),
\end{equation}
where all $j_i\in\mathbb{R}$, $\alpha\in\mathbb{R}$, and $U_{2\pi n}=e^{i\phi_n}\mathbf{1}$ with $\phi_n\in \mathbb{R}$.

From $U_{2\pi}=\exp(i\phi_1 2\pi)\mathbf{1}$, it follows that\begin{equation}
    j_i=\phi_1+k  \label{spinsinrep}
\end{equation}  
where $k\in \mathbb{Z}$.
It is easy to check that every $U_\alpha$ of the form
\begin{equation}
U_\alpha=e^{i\phi_1\alpha}\bigoplus_{k\in\mathbb{Z}} e^{ik \alpha} \mathbf{1}_{n_k},\label{eq:projectiverepresentations}
\end{equation}
where $\phi_1\in\mathbb{R}$, $n_k\in \mathbb{N}$, $\mathbf{1}_{n_k}$ acts on the corresponding $n_k$-dimensional subspace $\ch_k$, and only a finite number of the $n_k\neq 0$, gives a valid projective representation of ${\rm SO}(2)$. 

At first glance, it seems like that $\phi_1$ introduces only an (on $\alpha$ depending) global phase, which cannot be observed and thus one might conclude that w.l.o.g  we can set $\phi_1=0$. However, it turns out that $\phi_1$ plays an important role as soon as we implement our constraint $|j_i|\leq J$. An immediate consequence of (\ref{spinsinrep}) is that $j_i-j_k\in\mathbb{Z}$ for all $i,k$. Hence, $|j_i|\leq J$ will constrain the number of allowed $j_k$ (up to multiplicity $n_k$). Namely, the maximal number of different $j_k$'s is given by $2\tilde{J}+1$, where $\tilde{J}\in \mathbb{N}_0\cup \mathbb{N}/2$ and $\tilde{J}\leq J< \tilde{J}+1$. This implies that we can restrict to projective representations of the form \begin{equation}
    U_\alpha=\bigoplus_{j=-J}^J e^{ij\alpha} \mathbf{1}_{n_j},\label{finalformrepresent}
\end{equation}   
where $J \in \mathbb{N}_0\cup \mathbb{N}/2$ and all $j\in \mathbb{N}_0$ (i.e. $\phi_1=0$) or all $j\in \mathbb{N}/2$ (i.e.\ $\phi_1=-1/2$) depending on if $J$ is in $\mathbb{N}_0$ or $\mathbb{N}/2$.

For a better understanding why we only need to consider representations of the form (\ref{finalformrepresent}), we look at an example where  $J=3/4$ and $j_1=-1/4$ and $j_2=3/4$. Thus we have a representation \begin{equation}
    U_\alpha=e^{-i \frac{\alpha}{4}}\mathbf{1}_{n_1}\oplus e^{i\frac{3\alpha}{4}}\mathbf{1}_{n_2}=e^{i\frac{\alpha}{4}}\left(e^{-i \frac{\alpha}{2}}\mathbf{1}_{n_1}\oplus e^{i\frac{\alpha}{2}}\mathbf{1}_{n_2}\right).
\end{equation}
However, we can define a new projective representation by $\tilde{U}_\alpha=e^{-i\frac{\alpha}{4}}U_\alpha$, which is now of the form (\ref{finalformrepresent}) with maximum $\tilde{J}=1/2$ and both representations will lead to the same observable physics.

So far we only considered representations on Hilbert spaces and thus on pure states. Now, we want to argue that this is actually sufficient. In principle, one could think of situations where in every run the measurement device 
picks a system associated with a different Hilbert space $\ch^i$ with a different maximum $J^i$ and a different phase $\phi^i$ appearing in $U^i_\alpha$. More generally, we can have physical systems that feature incoherent mixtures of bosonic and fermionic degrees of freedom, in accordance with an univalence superselection rule~\cite{Wightman}. Hence, the most general state space to consider is $\cs\left(\bigoplus_i\ch^i\right)$. Then, we have a representation $\cu_\alpha$ acting on $\mathcal{B}\left(\bigoplus_i\ch^i\right)$, which can be written as 
\begin{equation}
    \cu_\alpha(\rho)=\bigoplus_i (U^i_\alpha)^\dagger \bigoplus_k\rho^k \bigoplus_lU^l_\alpha=\bigoplus_i (U^i_\alpha)^\dagger \rho^i U^i_\alpha ,
\end{equation}
where $U^i_\alpha$ is a representation of ${\rm SO}(2)$ on $\ch^i$ and $\rho^i$ is a subnormalized state acting on $\ch^i$ and we can interpret $p_i=\Tr(\rho^i)$ 
 as the probability that the box prepares a state $\widetilde{\rho}^i =\rho^i/\Tr(\rho^i)\in\cs(\ch^i)$. For a POVM element $M_+$ the probabilities are given by \begin{eqnarray}
    P(+|\alpha)&=&\Tr(\cu_\alpha(\rho)M_+)=\sum_i \Tr((U^i_\alpha)^\dagger\rho^iU^i_\alpha M^i_+ )=\sum_ip_i P_i(+|\alpha),\label{probabiliteisfordifferentJm}
\end{eqnarray}
where $M^i_+=\Pi^iM_+\Pi^i$, $\Pi^i$ is the projection on $\mathcal{H}^i$ and $P_i(+|\alpha)=\Tr((U^i_\alpha)^\dagger\widetilde{\rho}^iU^i_\alpha M^i_+)\in \cq_{J^i}$. Now let $\ch^m$ be the subspace with the highest maximal spin i.e. $J^i\leq J^m$, for all $i$. This implies $\cq_{J^i}\subseteq \cq_{J^m}$ for all $i$, and in particular, we have $P_i(+|\alpha)\in\cq_{J^m}$. Thus, probabilities $P(+|\alpha)$ of the form~(\ref{probabiliteisfordifferentJm}) are convex combinations of elements of $\cq_{J^m}$ and hence $P(+|\alpha)\in\cq_{J^m}$. It follows that we could have found the same correlations only by considering the representation $U^m_\alpha$. This implies that we can restrict to representations of the form (\ref{finalformrepresent}) without loss of generality.

\section{Proof that ${\mathcal{Q}_{J,\alpha}=\mathcal{Q}'_{J,\alpha}}$}
\label{purestateeqmixedstateset}

We can show $\mathcal{Q}_{J,\alpha}\supseteq \mathcal{Q}'_{J,\alpha}$ by considering the purification of mixed states, and ensuring that the purification is carried out in such a way that it does not add any extra spin.

In purifying the state, we are embedding our Hilbert space $\mathcal{H}_{A}$ into a larger Hilbert space $\mathcal{H}_{AB}=\mathcal{H}_{A}\otimes\mathcal{H}_{B}$, where $\text{dim}(\mathcal{H}_A)=\text{dim}(\mathcal{H}_B)$. We require that $\mathcal{H}_{AB}$ has the same constraint on $J$, in order that it doesn't give any new correlations. Since the total spin of the composite system is the sum of the spins of the two individual systems, the Hilbert space of the ancilla system $\mathcal{H}_{B}$ must carry only the trivial representation. 

Let $\ket{\psi_1}$ be a purification of $\rho_1$, and let $\ket{\psi_2}:=U_A(\alpha)\otimes\mathbf{1}_B\ket{\psi_1}$, then $\ket{\psi_2}$ is also a purification of $\rho_2$. Since $\tr(\rho_x M)=\tr(\ketbra{\psi_x}{\psi_x} M_A\otimes\mathbf{1}_B)$, we have realized the mixed-state correlation via pure states. Thus, for any given correlation realized by mixed states $(E_1,E_2)\in\mathcal{Q}'_{J,\alpha}$, this correlation can also be realized via a pure state with the same bound $J$, i.e.\ $(E_1,E_2)\in\mathcal{Q}_{J,\alpha}$.

\section{Relaxed quantum \& classical sets}
\label{relaxedsets}

\textit{Relaxed quantum set.} We now characterize a more general quantum set, for which we only probabilistically know $J$. We consider the set of correlations such that the probability that $J_{\lambda}\leq J$ is at least $1-\omega$, where $0\leq \omega<1$:
\begin{equation} \label{probabilistic J defn}
   \mathcal{Q}^\omega_{J,\alpha}\coloneqq\left\{\left.\mathbf{E}\!=\!\sum_{\lambda} p(\lambda) \mathbf{E}_{\lambda}\,\,\right|\,\, \mathbf{E}_{\lambda}\!\in\!\mathcal{Q}_{J_{\lambda},\alpha},\sum_{\lambda:J_\lambda\leq J} p(\lambda)\geq 1\!-\!\omega\right\},
\end{equation}
where the second sum ranges over the values of $\lambda$ for which $J_\lambda\leq J$. We have labeled this first type of error $\omega$, in order to signify that it refers to a genuine, \textit{ontic} randomness, such as that associated with e.g.\ bounding the photon number when measuring single-mode coherent states. Note that the {correlations} $\mathbf{E}$ are convex mixtures (weighted by the probabilities) of $\mathbf{E}_\lambda$. Therefore we claim that the probabilistic correlations are of the following form:
\begin{equation} \label{probabilistic J form}
   \mathcal{Q}_{J,\alpha}^\omega=\{(1-\omega)\mathbf{E}+\omega\mathbf{E}'\,\,|\,\, \mathbf{E}\in\mathcal{Q}_{J,\alpha},\enspace \mathbf{E}'\mbox{ arbitrary}\}.
\end{equation}

We would like to show that the correlations $\bm{E}^\omega$ in $\mathcal{Q}_{J,\alpha}^\omega$, as defined in Eq.~(\ref{probabilistic J defn}), are of the form of Eq.~(\ref{probabilistic J form}), i.e. they are of the form:
\begin{align}\label{probabilistic J eqn}
\bm{E}^\omega\coloneqq&\sum_\lambda p(\lambda)\bm{E}_\lambda=(1-\omega)\bm{E}^J+\omega\bm{E}',
\end{align}
where $\mathbf{E}^J\in\mathcal{Q}_{J,\alpha}$, and $\mathbf{E}'$ is constrained only by the requirement to give valid probabilities.

Choose $\eta$ such that $\sum_{\lambda:J_\lambda\leq J} p(\lambda)=(1-\omega)+\eta$, then $\eta\geq 0$. Set $\mathbf{E}^J:=\sum_{\lambda:J_\lambda\leq J}\tilde p(\lambda) \mathbf{E}_\lambda$, where $\tilde p(\lambda):=p(\lambda)/(1-\omega+\eta)$. Due to convexity of the set $\mathcal{Q}_{J,\alpha}$, we have $\mathbf{E}^J\in\mathcal{Q}_{J,\alpha}$. If $\omega-\eta=0$, set $\mathbf{E}'=(1,1)$, and otherwise, set $\mathbf{E}':=\sum_{\lambda:J_\lambda>J} p'_\lambda \mathbf{E}_\lambda$, where $p'_\lambda:=p(\lambda)/(\omega-\eta)$. Again, due to convexity, $\mathbf{E}'$ is a valid correlation. So we have
\begin{align}
\bm{E}^\omega&=(1-\omega)\mathbf{E}^J+\eta\mathbf{E}^J+(\omega-\eta)\mathbf{E}'\nonumber\\
&=(1-\omega)\mathbf{E}^J+\omega\mathbf{E}'',
\end{align}
where we go to the second line by defining $\bm{E}'':=\frac{\eta}{\omega}\bm{E}^J+(1-\frac{\eta}{\omega})\bm{E}'$, which is also a valid correlation, as it is the convex combination of two valid correlations.

In order to plot the set $\mathcal{Q}^\omega_{J,\alpha}$, we use the convention of \cite{VanHimbeeck2017}:
\begin{eqnarray}
g(E_1,E_2)&\coloneqq  &\frac{1}{2}\left(\sqrt{1\!+\!E_1}\sqrt{1\!+\!E_2}+\sqrt{1\!-\!E_1}\sqrt{1\!-\!E_2}\right)\nonumber\\
    &\geq &\gamma.\label{g(E1,E2)}
\end{eqnarray}
A correlation $\bm{E}^\omega$ belongs to $\mathcal{Q}^\omega_{J,\alpha}$ if it can be written in the form of (\ref{probabilistic J eqn}). Equivalently, for any correlation $\bm{E}^J$, we can get to a new point $\bm{E}^\omega$ as the result of mixing with an arbitrary $\bm{E}'$, for which we sometimes obtain a new correlation that is outside the original set, $\bm{E}^\omega\notin\mathcal{Q}_{J,\alpha}$. We can characteriz $\mathcal{Q}^\omega_{J,\alpha}$ by considering how $\bm{E}'$ can maximally violate $\mathcal{Q}_{J,\alpha}$. This is done by mixing all correlations where $E_1\leq E_2$ with $\bm{E}'=(-1,1)$, and mixing all correlations where $E_1\geq E_2$ with $\bm{E}'=(1,-1)$. In other words, we mix all correlations with the extremal corners $(\pm1,\mp1)$.
\begin{subequations}
\begin{align}
   E^\omega_1&=(1-\omega)E_1\mp\omega;  \\
   E^\omega_2&=(1-\omega)E_2\pm\omega.
\end{align}
\end{subequations}
Hence, in order to check whether some given correlation $\mathbf{E}^\omega$ is in $\mathcal{Q}_{J,\alpha}^\omega$ or not, one has to check whether
\begin{equation}
    \gamma\leq\begin{cases}
    g\Big(\frac{E^\omega_1+\omega}{1-\omega},\frac{E^\omega_2-\omega}{1-\omega}\Big), & \text{if}\:E_1^\omega\leq E_2^\omega; \\
    g\Big(\frac{E^\omega_1-\omega}{1-\omega},\frac{E^\omega_2+\omega}{1-\omega}\Big), & \text{if}\:E_1^\omega\geq E_2^\omega.
    \end{cases}
\end{equation}

\textit{Example: Coherent states.}
The relaxed quantum set bears relevance for instances in which the physical systems sent from preparation to measurement device satisfy our spin bound only approximately. For example, coherent states
\begin{equation}
    \ket{\beta}=e^{-\frac{|\beta|^2}{2}}\sum_{n=0}^\infty \frac{\beta^n}{\sqrt{n!}}\ket{n}\qquad (\beta\in\mathbb{C})
\end{equation}
contain superpositions of arbitrary photon numbers $n$. While we cannot exactly impose a constraint on the maximum spin (i.e.\ helicity or photon number), the coherent state can be approximated by the state
\begin{equation}
    \ket{\beta'}:=\frac{\Pi_N\ket{\beta}}{\sqrt{\bra{\beta}\Pi_N\ket{\beta}}}=\frac{\Pi_N\ket{\beta}}{\sqrt{1-\eta}}
\end{equation}
for which the constraint on the maximum spin $J=N$ holds exactly. Here, $\Pi_N=\sum_{n=0}^N |n\rangle\langle n|$ is the projector onto the subspace of photon numbers less than or equal to $N$, and we have defined 
\begin{align}
\bra{\beta}\Pi_N\ket{\beta}&=:1-\eta,
\end{align}
where $\eta$ is given by
\begin{equation}
    \eta=1-e^{-|\beta|^2}\sum^N_{n=0}\frac{|\beta|^{2n}}{n!}.
\end{equation}
Thus we can find the overlap as
\begin{equation}
    \braket{\beta}{\beta'}=\sqrt{1-\eta}.
\end{equation}
We can interpret $\eta$ as the probability that our constraint on $J=N$ does not hold, even though we do not assume that the measurement is actually performed in our setup. We would like to show that the coherent state $|\beta\rangle$ produces correlations that are close to the correlation set $\mathcal{Q}_{J,\alpha}$, given that it has large overlap with $|\beta'\rangle$. To this end, we will use the trace distance~\cite{nielsen2010} for quantum states
\begin{equation}
    D(\rho,\sigma)=\frac{1}{2}\mbox{tr}(|\rho-\sigma|)=\max_{0\leq M \leq \mathbf{1}}|{\rm tr}(M(\rho-\sigma))|.
\end{equation}
For pure states, the trace distance is given by $D(\ketbra{\phi}{\phi},\ketbra{\psi}{\psi})=\sqrt{1- |\braket{\phi}{\psi}|^2}$ and hence
\begin{equation}
    D(\ketbra{\beta}{\beta},\ketbra{\beta'}{\beta'})=\sqrt{\eta}.
\end{equation}

Therefore
\begin{eqnarray}
   |\bra{\beta}M_b\ket{\beta}-\bra{\beta'}M_b\ket{\beta'}|
   &\leq&\max_{0\leq M\leq\mathbf{1}} |{\rm tr}(M(|\beta\rangle\langle\beta|-|\beta'\rangle\langle \beta'|))|\nonumber\\
   &=&D(\ketbra{\beta}{\beta},\ketbra{\beta'}{\beta'})\nonumber\\
   &=&\sqrt{\eta}.
\end{eqnarray}
The same inequality will hold for measurements on the states $U_\alpha|\beta\rangle$ versus $U_\alpha |\beta'\rangle$. Thus, denoting the resulting correlations when $|\beta\rangle$ resp.\ $|\beta'\rangle$ is sent by $\mathbf{E}$ resp.\ $\mathbf{E}'$, we have
\begin{subequations}
    \begin{align}
        E_1=\langle\beta|(M_1-M_{-1})|\beta\rangle,\\ E'_1=\langle\beta'|(M_1-M_{-1})|\beta'\rangle,
    \end{align}
\end{subequations}
and thus $|E_1-E_1'|\leq 2 \sqrt{\eta}$. Similar argumentation shows that $|E_2-E'_2|\leq 2\sqrt{\eta}$. Note that this result can equivalently be posed in terms of probabilities as $|P^+_i-P_i^{+'}|\leq\sqrt{\eta}$ with $i\in\{0,\alpha\}$.
This allows us to determine a relaxed quantum set $\mathcal{Q}_{J,\alpha}^\delta$ that contains all the correlations generated by the coherent state $|\beta\rangle$.

To do so, we consider a general error $0\leq\kappa<1$ (above, $\kappa=\sqrt{\eta}$) and ask for the smallest $\delta\geq 0$ such that
\begin{equation}
   |E_i-E_i'|\leq 2\kappa, \mathbf{E}\in\mathcal{Q}_{J,\alpha}\Rightarrow \mathbf{E}'\in\mathcal{Q}_{J,\alpha}^\delta.
\end{equation}

In Fig.~\ref{errorPlot} we illustrate the set generated by this error, i.e.\ by the left-hand side of this implication. The plot is given in terms of probabilities $\mathbf{P}^+=(P^+_0,P^+_\alpha)=(P(+|0),P(+|\alpha))$, which are connected to the correlations $\mathbf{E}$ by the map
\begin{equation}
    \mathbf{P}^+=\frac{1}{2}\mathbf{E}+\left(\frac{1}{2},\frac{1}{2}\right),
    \label{EmapstoP}
\end{equation}
which is a bijective affine transformation, with an affine inverse
\begin{equation}
    \mathbf{E}=2 \mathbf{P}^+-(1,1).
\end{equation}  
The orange curves corresponding to the boundary of the error set given by $\kappa$ has been obtained by adding an error box around the curves $c_1,c_2$, which parametrize the extreme points of the quantum set $\cq_{J,\alpha}$ (besides the extremal points belonging to $\mathbf{P}^+=(0,0)$ and $\mathbf{P}^+=(1,1)$), given by \begin{eqnarray}
   c_1(\tau)&=& (\cos^2(J\tau),\cos^2(J(\tau+\alpha)))\quad(\tau\in I_1),\label{c1}\\
   c_2(\tau)&=&  (\cos^2(J\tau),\cos^2(J(\tau-\alpha)))\quad (\tau\in I_2),
   \label{c2}
\end{eqnarray}
where $I_1= \left[ 0,\frac \pi {2J}-\alpha\right]$ and $I_2=\left[\alpha,\frac\pi {2J}\right]$.

We start by looking at the point in the orange line such that $P^{+}_{\alpha}=0$. By geometrical arguments, this point can be found to be
\begin{equation}
\left(\cos^2(\arccos{\sqrt{\kappa}}-J\alpha)+\kappa,0\right).
\end{equation}
Then, from this point it follows that the set $\mathcal{Q}_{J,\alpha}^{\delta}$ has
\begin{equation}
\delta=2\left(\kappa+\sqrt{\kappa(1-\kappa)}\tan(J\alpha)\right)=2\tan(J\alpha)\sqrt{\kappa}+\mathcal{O}(\kappa).
\end{equation}
In Fig.~\ref{errorPlot} we show the boundary (green curve) of the relaxed quantum set $\mathcal{Q}_{J,\alpha}^{\delta}$. Most importantly, one observes that $\mathcal{Q}_{J,\alpha}^{\delta}$ includes the orange curve. While a proof remains to be found, we have strong numerical evidence to believe that this inclusion holds for any values of $J,\alpha,\kappa$.

This shows that our definition of the relaxed quantum set $\mathcal{Q}_{J,\alpha}^\delta$ also describes physically well-motivated examples like photon number unncertainty in single-mode coherent states, where $\delta=\mathcal{O}(\eta^{1/4})$, for $\eta$ the probability of observing more than $J=N$ photons.

\begin{figure}
\centering 
\includegraphics[width=0.5\columnwidth]{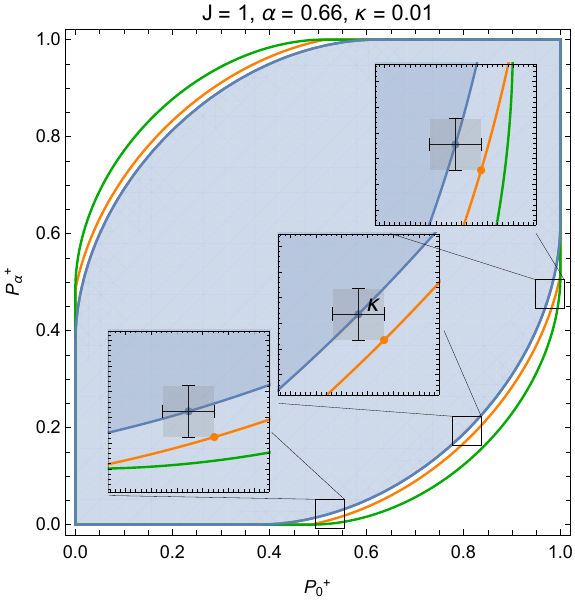}
\caption{The quantum set $\mathcal{Q}_{J,a}$ (blue), the boundary of the set with error $\kappa$ (orange) and the boundary of the smallest relaxed quantum set $\mathcal{Q}_{J,a}^{\delta}$ (green) such that it includes the previous error set given by $\kappa$. The plot illustrates that $\mathcal{Q}_{J,a}^{\delta}$ always contains the set given by the error $\kappa$.}
\label{errorPlot}
\end{figure}

\textit{Relaxed classical set.} Suppose now that the experimental error $\delta$ is of such a type that it could in principle be anticipated by an eavesdropper, such that there are more classical strategies available to her. We denote such \textit{epistemic} error by $\varepsilon$, and characteriz the set of classical correlations $\mathcal{C}^{\varepsilon}_{J,\alpha}$ accounting for this uncertainty. As in \cite{VanHimbeeck2017}, we define the relaxed classical set in terms of its quantum equivalent, with the additional assumption of deterministic outcomes:
\begin{align}
   \mathcal{C}_{J,\alpha}^\varepsilon=&\left\{\mathbf{E}=\sum_\lambda p(\lambda) \mathbf{E}_\lambda\,\,\left|\,\, \mathbf{E}_\lambda\in\mathcal{Q}_{J_\lambda,\alpha},
\sum_{\lambda:J_\lambda\leq J} p(\lambda)\geq 1-\varepsilon, \mathbf{E}_\lambda\in\{\pm 1\}\times \{\pm 1\}\right.\right\}.
\end{align}
Suppose that $J$ and $\alpha$ are such that $\mathcal{C}_{J,\alpha}$ is not the full square, $(\pm 1,\mp 1)\not\in\mathcal{C}_{J,\alpha}$. Then for all $\lambda$ with $J_\lambda\leq J$, we have $|E_{1|\lambda}-E_{2|\lambda}|=0$, and so all $\mathbf{E}^\varepsilon\in\mathcal{C}_{J,\alpha}^\varepsilon$ satisfy
\begin{equation}
|E^\varepsilon_1-E^\varepsilon_2|\leq\sum_\lambda p(\lambda) |E_{1|\lambda}-E_{2|\lambda}|
\leq 2\varepsilon. \label{quantum classical set}
\end{equation}
Conversely, suppose that $\mathbf{E}^\varepsilon$ satisfies Eq.~(\ref{quantum classical set}). Consider the case $E_1^\varepsilon\leq E_2^\varepsilon$ (the other case can be treated analogously). In this case, it is geometrically clear that the line starting at the corner $(-1,1)$ which crosses $\mathbf{E}^\varepsilon$ hits the diagonal at some point $(E,E)$ with $-1\leq E \leq 1$. In other words, $\mathbf{E}^\varepsilon=\kappa(-1,1)+(1-\kappa)(E,E)$ for some $\kappa\in[0,1]$. From this, we obtain $\kappa=\frac 1 2 (E_2^\varepsilon-E_1^\varepsilon)\leq\varepsilon$, and so $E^\varepsilon\in\mathcal{C}_{J,\alpha}^\varepsilon$, since $(E,E)$ is a convex combination of the deterministic correlations $(\pm 1,\pm 1)$.

In summary, inequality~(\ref{quantum classical set}) characterizes $\mathcal{C}_{J,\alpha}^\varepsilon$ exactly. \\

\section{Calculating $H^\star_{0,0,\alpha}$}
\label{pironiolink}
Here, we point out an equivalence between formulating the optimisation problem in Eq.~(\ref{optimisation})and the one presented in~\cite{van2019correlations}, in order to use the algorithm presented in~\cite[Sec.\ 3.3]{van2019correlations} to find a numerical solution to our optimisation problem.
In particular, the equivalence holds for the no-errors case (setting $\varepsilon=\omega=0$ in Eq.~(\ref{optimisation}), or (in the terminology of~\cite{van2019correlations}) no max-average assumption. Under these cases, the only difference between formulating both optimisation problems is on how the set of allowed quantum correlations is defined. In our case, we have
\begin{align}
\cq_{J,\alpha}:=\{ \textbf{E}|E_1=\Tr (\rho_1 M), \, E_2=\Tr (U_{\alpha} \rho_1 U_{\alpha}^{\dagger} M) \}.
\end{align} 
Meanwhile, in their case, the set of quantum correlations is defined as
\begin{align}
\mathcal{Q}_{w_{\text{pk}}}:=\{(\textbf{E},\textbf{w})|\textbf{w}\leq \textbf{w}_{\text{pk}}, \, E_x=\Tr (\rho_x M), \, \Tr (\rho_x \mathcal{O})\leq w_x\},
\end{align} 
where $x=\{1,2\}$, $\mathcal{O}$ is some energy operator, and $w_1, w_2$ gives an upper bound on the energy peak (i.e.\ a ``max-peak'' assumption). In both cases, the set of correlations is characterizd by the overlap between quantum states. In particular, we have the inequality
\begin{align}
\frac{1}{2}\left(\sqrt{1+E_1}\sqrt{1+E_2}+\sqrt{1-E_1}\sqrt{1-E_2}\right)\geq \left\{\begin{matrix}
\cos(J\alpha)&\mbox{if}& |J\alpha|< \frac{\pi}{2}\\
0 &\mbox{if}&|J\alpha|\geq\frac{\pi}{2}
\end{matrix}\right. ,
\end{align}
whilst, in~\cite{van2019correlations}, they have
\begin{align}
\frac{1}{2}\left(\sqrt{1+E_1}\sqrt{1+E_2}+\sqrt{1-E_1}\sqrt{1-E_2}\right)\geq \sqrt{1-w_1}\sqrt{1-w_2}-\sqrt{w_1}\sqrt{w_2}  .
\end{align}
In our case, in order to calculate $H^\star_{0,0,\alpha}$, one can perform the algorithm in~\cite[Sec.\ 3.3]{van2019correlations} by setting some $w_1=w_2=w_{\text{pk}}$ such that $\cos(J \alpha)=1-2w_{\text{pk}}$. A numerical study on the amount of randomness is beyond the scope of this manuscript. Nonetheless, the curious reader may be referred to Fig.~$5$(b) of \cite{van2019correlations} for an instance of $H^\star_{0,0,\alpha}$, for some given $J$ and $\alpha$ fulfilling the constraints.

\section{Bounding $H^\star_{\varepsilon,\omega,\alpha}$}
\label{boundinghstar}
Note that we have $\mathcal{Q}_{J,\alpha}=\mathcal{Q}_{J,-\alpha}$, and thus $\mathcal{Q}_{J,\alpha}^\omega=\mathcal{Q}_{J,-\alpha}^\omega$ and $H^\star_{\varepsilon,\omega,\alpha}=H^\star_{\varepsilon,\omega,-\alpha}$. Thus, it is sufficient to consider the case $\alpha>0$ in this section.

The relaxed quantum and classical sets play a role for characterising imperfect protocols, in which we may want to consider types of error introduced into our setup, whereby our assumption on $J$ may fail. First, we have $\varepsilon$-type error, which can be thought of approximately as \textit{epistemic} uncertainty of the experimenters; there may be additional variables $\lambda$ to which they may not have access (but to which Eve may), such that $\lambda$ describes exactly when our assumption fails. Second, we have $\omega$-type error, which can be thought of as genuine, \textit{ontic} randomness, such as that introduced by sending coherent states, for which the photon number (and the spin) can only be described probabilistically. This type of randomness cannot be described by any classical side-information that Eve may hold, and even she could never predict when the assumption fails. The amount of randomness $H^\star=H^\star_{\varepsilon,\omega,\alpha}$ that can be certified using our setup is thus affected by the room that we wish to grant for these types of experimental errors. 

To bound the certified randomness given such errors, let $\{p(\lambda),\mathbf{E}^\lambda\}$ be a minimizing ensemble for $H_{\varepsilon,\omega,\alpha}^\star$, i.e.
\begin{eqnarray}
H_{\varepsilon,\omega,\alpha}^\star &=& \min_{\{p(\lambda),\mathbf{E}^\lambda\}} \sum_\lambda p(\lambda) H(\mathbf{E}^\lambda),\label{eqConstraint1}\\
&&\mbox{subject to }\sum_{\lambda:\mathbf{E}^\lambda \in \mathcal{Q}_{J,\alpha}^\omega} p(\lambda)\geq 1-\varepsilon,\label{eqConstraint2}\\
&&\mbox{and }\sum_\lambda p(\lambda)\mathbf{E}^\lambda = \mathbf{E}.\label{eqConstraint3}
\end{eqnarray}
Then, for all $\varepsilon'\geq\varepsilon$, we trivially have
\begin{equation}
   \sum_{\lambda:\mathbf{E}^\lambda \in\mathcal{Q}_{J,\alpha}^\omega}p(\lambda)\geq 1-\varepsilon\geq 1-\varepsilon',
\end{equation}
and so the ensemble $\{p(\lambda),\mathbf{E}^\lambda\}$ satisfies the conditions that define the optimisation problem for $H^\star_{\varepsilon',\omega,\alpha}$. Consequently,
\begin{equation}
   H^\star_{\varepsilon',\omega,\alpha}\leq H^\star_{\varepsilon,\omega,\alpha}.
   \label{EqStar1}
\end{equation}
Similarly, $\omega'\geq\omega$ implies $\mathcal{Q}_{J,\alpha}^{\omega'}\supseteq \mathcal{Q}_{J,\alpha}^\omega$, and so
\begin{equation}
   \sum_{\lambda:\mathbf{E}^\lambda \in\mathcal{Q}_{J,\alpha}^{\omega'}}p(\lambda)\geq \sum_{\lambda:\mathbf{E}^\lambda \in\mathcal{Q}_{J,\alpha}^{\omega}}p(\lambda)\geq 1-\varepsilon,
\end{equation}
hence $\{p(\lambda),\mathbf{E}^\lambda\}$ is also a candidate ensemble for $H^\star_{\varepsilon,\omega',\alpha}$. It follows that
\begin{equation}
H^\star_{\varepsilon,\omega',\alpha}\leq H^*_{\varepsilon,\omega,\alpha}.
\label{EqStar2}
\end{equation}
In particular, Eqs.~(\ref{EqStar1}) and~(\ref{EqStar2}) imply the following result:
\begin{lemma}
Both types of error $\varepsilon,\omega>0$ decrease the number of certified random bits:
\begin{equation}
H^\star_{\varepsilon,\omega,\alpha}\leq H^\star_{0,0,\alpha}.
\end{equation}
\end{lemma}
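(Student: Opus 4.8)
The plan is to deduce the Lemma immediately from the two monotonicity relations~(\ref{EqStar1}) and~(\ref{EqStar2}) just established. First I would apply~(\ref{EqStar1}) with $\varepsilon$ playing the role of $\varepsilon'$ and $0$ playing the role of $\varepsilon$ --- legitimate since $\varepsilon\geq 0$ --- which yields $H^\star_{\varepsilon,\omega,\alpha}\leq H^\star_{0,\omega,\alpha}$. Then I would apply~(\ref{EqStar2}) with the first index fixed at $0$, with $\omega$ in the role of $\omega'$ and $0$ in the role of $\omega$, which yields $H^\star_{0,\omega,\alpha}\leq H^\star_{0,0,\alpha}$. Chaining the two inequalities gives $H^\star_{\varepsilon,\omega,\alpha}\leq H^\star_{0,0,\alpha}$, as claimed.

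Alternatively, one can argue directly at the level of the optimisation problem~(\ref{optimisation}). Let $\{p(\lambda),\mathbf{E}^\lambda\}$ be a minimiser for $H^\star_{0,0,\alpha}$. Since $\varepsilon=\omega=0$ forces $\mathbf{E}^\lambda\in\mathcal{Q}_{J,\alpha}=\mathcal{Q}^0_{J,\alpha}$ for every $\lambda$ with $p(\lambda)>0$, and since $\mathcal{Q}^0_{J,\alpha}\subseteq\mathcal{Q}^\omega_{J,\alpha}$ for every $\omega\geq 0$, this same ensemble satisfies $\sum_{\lambda:\mathbf{E}^\lambda\in\mathcal{Q}^\omega_{J,\alpha}}p(\lambda)\geq 1\geq 1-\varepsilon$ together with $\sum_\lambda p(\lambda)\mathbf{E}^\lambda=\mathbf{E}$, so it is feasible for the problem defining $H^\star_{\varepsilon,\omega,\alpha}$. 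Hence the minimum over the (larger) feasible set is no larger than $\sum_\lambda p(\lambda) H(\mathbf{E}^\lambda)=H^\star_{0,0,\alpha}$, i.e.\ $H^\star_{\varepsilon,\omega,\alpha}\leq H^\star_{0,0,\alpha}$.

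There is no real obstacle here: the statement is a bookkeeping consequence of the principle that enlarging the feasible set cannot increase the value of a minimisation. The only points worth a sentence of care are that the intermediate quantity $H^\star_{0,\omega,\alpha}$ is well-defined (its feasible set is nonempty precisely when $\mathbf{E}$ is an attainable correlation, which is the only case in which $H^\star$ is invoked), and that~(\ref{EqStar1}) and~(\ref{EqStar2}) were derived for fixed but otherwise arbitrary values of the remaining two parameters, so the substitutions used in the chaining argument are permitted.
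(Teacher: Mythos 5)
Your proof is correct and follows essentially the same route as the paper: the lemma is obtained there too by chaining the two monotonicity inequalities~(\ref{EqStar1}) and~(\ref{EqStar2}), whose derivations are exactly the feasibility argument you give as your alternative. Nothing is missing.
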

To obtain inequalities in the converse direction, the following intermediate result will be useful. It is motivated by the Taylor expansion
\begin{equation}
   \arccos\left(\strut(1-\omega)^2 \cos x\right)=x+2 \omega\cot x+\mathcal{O}(\omega^2),
\end{equation}
which also shows that the constant factor appearing in front of $\omega$ in the following lemma cannot be improved.
\begin{lemma}
\label{LemArccos}
Suppose that $0\leq\omega\leq 1$ and $0\leq x \leq\frac\pi 2$. Then
\begin{equation}
   (1-\omega)^2\cos x \geq \cos_*(x+2\omega\cot x),
\end{equation}
where
\begin{equation}
   \cos_*(t):=\left\{
      \begin{array}{cl}
      	  \cos t &\mbox{if } 0\leq t \leq \frac \pi 2\\
      	  0 & \mbox{otherwise}.
      \end{array}
   \right.
\end{equation}
\end{lemma}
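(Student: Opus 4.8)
The plan is to reduce the inequality to a one-variable statement and then verify it by comparing derivatives. First I would dispose of the trivial cases: if $\omega=1$ the left-hand side is $0$ and the claim is immediate since $\cos_*\geq 0$ fails --- wait, rather $\cos_*\geq 0$ always, so I must argue the other way: when $x+2\omega\cot x\geq\pi/2$ the right-hand side is $0$ and the inequality holds because $(1-\omega)^2\cos x\geq 0$ for $x\in[0,\pi/2]$. Also $x=0$ is a removable endpoint where $\cot x\to+\infty$, forcing the argument of $\cos_*$ past $\pi/2$, so the bound holds there too; similarly $x=\pi/2$ gives $\cos x=0$ and argument $\geq\pi/2$, again fine. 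So the only real content is the regime where $0<x<\pi/2$ \emph{and} $t:=x+2\omega\cot x<\pi/2$, where the claim becomes $(1-\omega)^2\cos x\geq\cos(x+2\omega\cot x)$.

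In that regime I would fix $x$ and treat both sides as functions of $\omega\in[0,\omega_0]$, where $\omega_0$ is the largest $\omega$ keeping $t<\pi/2$. At $\omega=0$ both sides equal $\cos x$, so it suffices to show the derivative of the left-hand side dominates that of the right-hand side for all relevant $\omega$. Differentiating: $\frac{d}{d\omega}(1-\omega)^2\cos x=-2(1-\omega)\cos x$, while $\frac{d}{d\omega}\cos(x+2\omega\cot x)=-2\cot x\,\sin(x+2\omega\cot x)$. So the required inequality is
\[
   -2(1-\omega)\cos x\geq -2\cot x\,\sin(x+2\omega\cot x),
\]
i.e.\ (dividing by $-2$ and flipping)
\[
   (1-\omega)\cos x\leq \cot x\,\sin(x+2\omega\cot x).
\]
Since $\cot x=\cos x/\sin x>0$ on $(0,\pi/2)$, this is equivalent to
\[
   (1-\omega)\sin x\leq \sin(x+2\omega\cot x).
\]
Now $\sin$ is increasing on $[0,\pi/2]$ and $x+2\omega\cot x\geq x$, hence $\sin(x+2\omega\cot x)\geq\sin x\geq(1-\omega)\sin x$, which settles it. Integrating this pointwise derivative bound from $0$ to $\omega$ gives the lemma on the nontrivial regime, and combining with the boundary/trivial cases above completes the proof.

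The one subtlety --- and the main thing to be careful about --- is the matching of regimes: I need the derivative comparison to be valid on exactly the $\omega$-interval where $t(\omega)=x+2\omega\cot x$ stays in $[0,\pi/2]$, and to handle the transition at $t=\pi/2$ cleanly. The clean way is: let $\omega^*=\sup\{\omega: x+2\omega\cot x\leq\pi/2\}$ (possibly $+\infty$ if $\cot x\le 0$, but that can't happen on $(0,\pi/2)$, so $\omega^*$ is finite). For $\omega\leq\min(\omega^*,1)$ the integration argument gives $(1-\omega)^2\cos x\geq\cos(x+2\omega\cot x)=\cos_*(x+2\omega\cot x)$; for $\omega>\omega^*$ we have $\cos_*(x+2\omega\cot x)=0\leq(1-\omega)^2\cos x$ provided $\omega\leq 1$, which is assumed. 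Since these two ranges cover $[0,1]$, the lemma follows. I expect the derivative-reduction step to be the place where care is needed to avoid sign errors, but no genuine obstacle; the monotonicity of $\sin$ does all the work once the problem is set up correctly. The Taylor-expansion remark in the statement is just a sanity check that the constant $2\cot x$ is sharp and needs no separate proof.
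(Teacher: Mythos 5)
Your proposal is correct and follows essentially the same route as the paper: fix $x$, note both sides agree at $\omega=0$, and show $\frac{d}{d\omega}\left[(1-\omega)^2\cos x-\cos(x+2\omega\cot x)\right]\geq 0$, which reduces to $(1-\omega)\sin x\leq\sin(x+2\omega\cot x)$. The only difference is that you settle this last inequality directly from the monotonicity of $\sin$ on $[0,\pi/2]$ (since $x+2\omega\cot x\geq x$ there), whereas the paper proves it by a second derivative-comparison argument; your version is a slight simplification of the same proof.
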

\begin{proof}
If $x+2\omega\cot x \geq \frac\pi 2$, then the inequality is trivially true, since the left-hand side is non-negative. Thus, it is sufficient to consider the case $x+2\omega\cot x < \frac\pi 2$. For every fixed $x$, consider the function $g:[0,1]\to\mathbb{R}$, given by
\begin{equation}
   g(\omega):=\sin(x+2\omega\cot x)-(1-\omega)\sin x.
\end{equation}
We have $g(0)=0$ and
\begin{equation}
   g'(\omega)=2\cos(x+2\omega\cot x)\cot x + \sin x \geq 0.
\end{equation}
Therefore, $g(\omega)\geq 0$ for all $\omega\in[0,1]$. Now consider
\begin{equation}
   f(\omega):=(1-\omega)^2 \cos x -\cos(x+2\omega\cot x).
\end{equation}
We have $f(0)=0$ and
\begin{equation}
   f'(\omega)=2(\cot x) g(\omega)\geq 0.
\end{equation}
Thus $f(\omega)\geq 0$ for all $\omega\in [0,1]$.
\end{proof}
\begin{lemma}
\label{LemInclusionRelaxed}
For $0\leq\omega\leq 1$, we have the set inclusion
\begin{equation}
   \mathcal{Q}_{J,\alpha}^\omega \subseteq \mathcal{Q}_{J,\alpha+2\omega\cot(J\alpha)/J}.
\end{equation}
\end{lemma}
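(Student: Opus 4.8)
\emph{Proof proposal.} The plan is to collapse the claimed set inclusion to a single scalar inequality and then obtain that inequality from the already-established Lemma~\ref{LemArccos}. I work throughout in the regime $0<J\alpha<\tfrac\pi2$ relevant to the application (so that $c:=2\cot(J\alpha)/J>0$); write $g(E_1,E_2):=\tfrac12\big(\sqrt{1+E_1}\sqrt{1+E_2}+\sqrt{1-E_1}\sqrt{1-E_2}\big)$, and recall from~(\ref{quantumset})--(\ref{eqQuantum2}) together with the identity $\mathcal{Q}_{J,\beta}=\widetilde{\mathcal{Q}}_{J,\beta}$ of Supplemental Material~\ref{identityofqandtildeq} that, for every $\beta\geq0$, $\mathcal{Q}_{J,\beta}=\{(E_1,E_2)\in[-1,1]^2 : g(E_1,E_2)\geq\cos_*(J\beta)\}$, with $\cos_*$ as in Lemma~\ref{LemArccos} (note $\cos_*(J\beta)=\gamma$ for both cases of~(\ref{eqQuantum2})). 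By the characterisation~(\ref{probabilistic J form}), every $\mathbf{E}^\omega\in\mathcal{Q}^\omega_{J,\alpha}$ is of the form $\mathbf{E}^\omega=(1-\omega)\mathbf{E}+\omega\mathbf{E}'$ with $\mathbf{E}\in\mathcal{Q}_{J,\alpha}$ and $\mathbf{E}'\in[-1,1]^2$ arbitrary; since then $\mathbf{E}^\omega\in[-1,1]^2$ (being a convex combination of points of $[-1,1]^2$), it suffices to prove $g(\mathbf{E}^\omega)\geq\cos_*(J\alpha')$ with $\alpha':=\alpha+2\omega\cot(J\alpha)/J$.

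\emph{Key estimate.} The heart of the argument is the bound $g\big((1-\omega)\mathbf{E}+\omega\mathbf{E}'\big)\geq(1-\omega)\,g(\mathbf{E})$, valid for every $\mathbf{E}'\in[-1,1]^2$. Passing to probability coordinates $P_i=(1+E_i)/2$ and $P_i'=(1+E_i')/2$ we have $g=\sqrt{P_1P_2}+\sqrt{(1-P_1)(1-P_2)}$, and the $i$-th coordinate of the mixture is $(1-\omega)P_i+\omega P_i'$. Applying the elementary superadditivity inequality $\sqrt{(a+a')(b+b')}\geq\sqrt{ab}+\sqrt{a'b'}$ (for $a,a',b,b'\geq0$) to each of the two square roots in $g(\mathbf{E}^\omega)$ --- splitting each coordinate $(1-\omega)P_i+\omega P_i'$, and likewise $(1-\omega)(1-P_i)+\omega(1-P_i')$, into its two summands --- and discarding the non-negative $\omega$-terms yields $g(\mathbf{E}^\omega)\geq(1-\omega)\big(\sqrt{P_1P_2}+\sqrt{(1-P_1)(1-P_2)}\big)=(1-\omega)\,g(\mathbf{E})$.

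\emph{Conclusion.} Since $\mathbf{E}\in\mathcal{Q}_{J,\alpha}$ we have $g(\mathbf{E})\geq\cos(J\alpha)\geq0$, and $0\leq\omega\leq1$ gives $1-\omega\geq(1-\omega)^2$; hence, invoking Lemma~\ref{LemArccos} with $x=J\alpha$,
\[
   g(\mathbf{E}^\omega)\ \geq\ (1-\omega)\cos(J\alpha)\ \geq\ (1-\omega)^2\cos(J\alpha)\ \geq\ \cos_*\!\big(J\alpha+2\omega\cot(J\alpha)\big)=\cos_*(J\alpha').
\]
This is precisely the membership condition for $\mathcal{Q}_{J,\alpha'}$, so $\mathbf{E}^\omega\in\mathcal{Q}_{J,\alpha'}$, as required. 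I expect no real obstacle here: all of the genuine analytic content is packaged in Lemma~\ref{LemArccos}, and working directly with the representation~(\ref{probabilistic J form}) bypasses any extreme-point bookkeeping. The only things to keep an eye on are the degenerate cases where some $J\beta\geq\tfrac\pi2$ (then $\mathcal{Q}_{J,\beta}=[-1,1]^2$ and the condition $g\geq\cos_*(J\beta)=0$ is automatic), and the point that we intentionally weaken the sharp factor $(1-\omega)$ from the key estimate to $(1-\omega)^2$ in order to match the stated angle shift $2\omega\cot(J\alpha)/J$ --- whose tightness is exactly the Taylor expansion $\arccos\!\big((1-\omega)^2\cos x\big)=x+2\omega\cot x+\mathcal{O}(\omega^2)$ noted before Lemma~\ref{LemArccos}.
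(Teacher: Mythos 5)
Your proposal is correct and follows essentially the same route as the paper: decompose $\mathbf{E}^\omega=(1-\omega)\mathbf{E}+\omega\mathbf{E}'$ via Eq.~(\ref{probabilistic J form}), bound the overlap expression $g(\mathbf{E}^\omega)$ from below by a multiple of $\cos(J\alpha)$, and finish by invoking Lemma~\ref{LemArccos}. The only difference is in the intermediate step: the paper uses concavity of $x\mapsto\sqrt{1\pm x}$ and discards the cross terms to reach the factor $(1-\omega)^2$, whereas your superadditivity inequality $\sqrt{(a+a')(b+b')}\geq\sqrt{ab}+\sqrt{a'b'}$ yields the slightly sharper factor $(1-\omega)$, which you then deliberately weaken to $(1-\omega)^2$ so that Lemma~\ref{LemArccos} applies.
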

\begin{proof}
If $J\alpha+2\omega\cot(J\alpha)>\frac\pi 2$, then $\mathcal{Q}_{J,\alpha+2\omega\cot(J\alpha)/J}=[-1,1]^2$, and the claim is trivially true. Thus, we may assume that $J\alpha+2\omega\cot(J\alpha)\leq\frac\pi 2$. Suppose that $\mathbf{E}\in\mathcal{Q}_{J,\alpha}^\omega$, then there exist $\mathbf{E}^J\in\mathcal{Q}_{J,\alpha}$ and $\mathbf{E}'\in[-1,1]^2$ such that
\begin{equation}
   \mathbf{E}=(1-\omega)\mathbf{E}^J+\omega \mathbf{E}'.
\end{equation}
Now we use the fact the the functions $x\mapsto \sqrt{1-x}$ and $x\mapsto \sqrt{1+x}$ are concave and non-negative on $[-1,1]$:
\begin{eqnarray}
&&\frac 1 2  \left(\sqrt{1+E_1}\sqrt{1+E_2}+\sqrt{1-E_1}\sqrt{1-E_2}\right)\nonumber\\
&=&\frac 1 2 \left(\sqrt{1+(1-\omega)E_1^J+\omega E'_1}\sqrt{1+(1-\omega)E_2^J+\omega E'_2}\right.\nonumber\\
&&\left. \quad + \sqrt{1-\left(\strut(1-\omega)E_1^J+\omega E'_1\right)}\sqrt{1-\left(\strut(1-\omega)E_2^J+\omega E'_2\right)}\right)\nonumber\\
&\geq& \frac 1 2 \left[\left((1-\omega)\sqrt{1+E_1^J}+\omega\sqrt{1+E'_1}\right)\left((1-\omega)\sqrt{1+E_2^J}+\omega\sqrt{1+E'_2}\right)\right.\nonumber\\
&& \left. \quad+ \left((1-\omega)\sqrt{1-E_1^J}+\omega\sqrt{1-E'_1}\right)\left((1-\omega)\sqrt{1-E_2^J}+\omega\sqrt{1-E'_2}\right)\right]\nonumber\\
&\geq& \frac 1 2 \left[ \sqrt{1+E_1^J}\sqrt{1+E_2^J}+\sqrt{1-E_1^J}\sqrt{1-E_2^J}\right](1-\omega)^2\nonumber\\
&\geq& (1-\omega)^2 \cos(J\alpha)\nonumber\\
&\geq& \cos_*(J\alpha +2\omega \cot(J\alpha)),
\end{eqnarray}
where we have used Lemma~\ref{LemArccos} in the final step. Therefore, $\mathbf{E}\in\mathcal{Q}_{J,\alpha+2\omega\cot(J\alpha)/J}$, and the claim follows.
\end{proof}
Now recall Eqs.~(\ref{eqConstraint1}), (\ref{eqConstraint2}) and~(\ref{eqConstraint3}) which are satisfied by the minimizing ensemble $\{p(\lambda),\mathbf{E}^\lambda\}$ for $H^\star_{\varepsilon,\omega,\alpha}$. In particular, Eq.~(\ref{eqConstraint2}) and Lemma~\ref{LemInclusionRelaxed} imply that
\begin{equation}
   \sum_{\lambda:\mathbf{E}^\lambda\in\mathcal{Q}_{J,\alpha+2\omega\cot(J\alpha)/J}}p(\lambda)\geq    \sum_{\lambda:\mathbf{E}^\lambda\in\mathcal{Q}_{J,\alpha}^\omega}p(\lambda)\geq 1-\varepsilon,
   \label{eqConstraint4}
\end{equation}
and so Eqs.~(\ref{eqConstraint3}) and~(\ref{eqConstraint4}) show that $\{p(\lambda),\mathbf{E}^\lambda\}$ is a candidate ensemble for $H^\star_{J,\alpha+2\omega\cot(J\alpha)/J}$. This proves the following corollary:
\begin{corollary}
\label{CorEpsZero}
We have
\begin{equation}
   H^\star_{\varepsilon,\omega,\alpha}\geq H^\star_{\varepsilon,0,\alpha+2\omega \cot(J\alpha)/J}.
\end{equation}
\end{corollary}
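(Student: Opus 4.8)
\emph{Proof proposal.} The plan is to exhibit a minimizing ensemble for $H^\star_{\varepsilon,\omega,\alpha}$ as a feasible point of the optimization problem~(\ref{optimisation}) for the parameters $(\varepsilon,0,\alpha')$ with $\alpha':=\alpha+2\omega\cot(J\alpha)/J$, and then read off the inequality from optimality. Concretely, I would let $\{p(\lambda),\mathbf{E}^\lambda\}$ be a minimizing ensemble for $H^\star_{\varepsilon,\omega,\alpha}$, so that it satisfies Eqs.~(\ref{eqConstraint1})--(\ref{eqConstraint3}), and check that it also satisfies the defining constraints of $H^\star_{\varepsilon,0,\alpha'}$ with the same objective value.

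The first observation is that two of the three defining conditions are insensitive to which error parameters are used: the marginal constraint $\sum_\lambda p(\lambda)\mathbf{E}^\lambda=\mathbf{E}$ in~(\ref{eqConstraint3}) just fixes the observed point $\mathbf{E}$ and contains no reference to $\varepsilon$, $\omega$ or $\alpha$, and the objective $\sum_\lambda p(\lambda)H(\mathbf{E}^\lambda)$ likewise depends only on the ensemble itself. So the only thing that changes when we reinterpret $\{p(\lambda),\mathbf{E}^\lambda\}$ as a candidate for the $(\varepsilon,0,\alpha')$ problem is the ``approximate spin'' constraint~(\ref{eqConstraint2}). Here I would invoke Lemma~\ref{LemInclusionRelaxed}: since $\mathcal{Q}^\omega_{J,\alpha}\subseteq\mathcal{Q}_{J,\alpha'}=\mathcal{Q}^0_{J,\alpha'}$, every index $\lambda$ with $\mathbf{E}^\lambda\in\mathcal{Q}^\omega_{J,\alpha}$ also lies in $\mathcal{Q}^0_{J,\alpha'}$, whence
\[
   \sum_{\lambda:\mathbf{E}^\lambda\in\mathcal{Q}^0_{J,\alpha'}}p(\lambda)\geq \sum_{\lambda:\mathbf{E}^\lambda\in\mathcal{Q}^\omega_{J,\alpha}}p(\lambda)\geq 1-\varepsilon,
\]
which is exactly condition~(\ref{eqConstraint2}) for the parameters $(\varepsilon,0,\alpha')$. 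The degenerate range $J\alpha+2\omega\cot(J\alpha)\geq\pi/2$, where $\mathcal{Q}_{J,\alpha'}=[-1,1]^2$ and the constraint becomes vacuous, needs no separate treatment, as it is already folded into the statement of Lemma~\ref{LemInclusionRelaxed}.

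Putting this together, $\{p(\lambda),\mathbf{E}^\lambda\}$ is a feasible ensemble for the minimization defining $H^\star_{\varepsilon,0,\alpha'}$ and it attains objective value $H^\star_{\varepsilon,\omega,\alpha}$; since $H^\star_{\varepsilon,0,\alpha'}$ is the minimum of that objective over \emph{all} feasible ensembles, we get $H^\star_{\varepsilon,0,\alpha'}\leq H^\star_{\varepsilon,\omega,\alpha}$, which is the claim. I expect no genuine obstacle at this level: the entire content of the corollary is the conversion of the probabilistic $\omega$-relaxation into a deterministic angular shift, and that work is already carried out by Lemma~\ref{LemInclusionRelaxed} (resting in turn on Lemma~\ref{LemArccos} and the concavity of $x\mapsto\sqrt{1\pm x}$). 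The only points requiring care are to keep the angular shift $2\omega\cot(J\alpha)/J$ exactly consistent between the lemma and the corollary, and to note that we are \emph{not} invoking any monotonicity of $H^\star$ in $\alpha$ --- only the ``same ensemble remains feasible'' argument together with the definition of $H^\star$ as a minimum.
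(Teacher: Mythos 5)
Your proposal is correct and follows essentially the same route as the paper: both take the minimizing ensemble for $H^\star_{\varepsilon,\omega,\alpha}$, observe that the objective and the marginal constraint are unchanged, and use Lemma~\ref{LemInclusionRelaxed} to convert the $\omega$-relaxed spin constraint into the exact constraint at the shifted angle $\alpha+2\omega\cot(J\alpha)/J$, so that the same ensemble is feasible for $H^\star_{\varepsilon,0,\alpha'}$. Nothing further is needed.
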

Finally, we would like to obtain an inequality that tells us what happens if replace a finite value of $\varepsilon$ by zero. To this end, we will use an inverse concavity property (see e.g.~\cite{nielsen2010}) of Shannon entropy
\begin{equation}
   H(q)=H(q_1,\ldots,q_n):=-\sum_{i=1}^n q_i \log q_i.
\end{equation}
If we have $n$ probability distributions $p^{(i)}=(p^{(i)}_1,\ldots,p^{(i)}_m)$ (vectors with non-negative entries summing to unity), and another probability distribution $q=(q_1,\ldots,q_n)$, then
\begin{equation}
H\left(\sum_{i=1}^ n q_i p^{(i)}\right) \leq \sum_{i=1}^n q_i H(p^{(i)})+H(q).
\end{equation}
In the notation of the main text, the corresponding entropy of a correlation $\mathbf{E}$ therefore {satisfies}
\begin{eqnarray}
   H(t\mathbf{E}+(1-t)\mathbf{E}')\leq t H(\mathbf{E})+(1-t)H(\mathbf{E}')-t\log t -(1-t)\log(1-t)
   \label{eqConcavity}
\end{eqnarray}
for all $0\leq t\leq 1$ (with the convention $0\log 0:=0$). To see this, use for example that $H(\mathbf{E})=\frac 1 2 \sum_x H(p^{(x)}_{\mathbf{E}})$, where $p^{(x)}_{\mathbf{E}}$ is the binary distribution with probabilities $\frac{1\pm E_x}2$. We will use this to prove the following result:
\begin{lemma}
\label{LemInverseConcavity}
For every $0\leq\varepsilon,\omega< 1$, we have
\begin{equation}
   H^\star_{\varepsilon,\omega,\alpha}\geq H^\star_{0,\varepsilon+\omega,\alpha}+\log(1-\varepsilon)-\frac{\varepsilon\log(2/\varepsilon)}{1-\varepsilon},
\end{equation}
with the convention that $0\log(2/0):=0$.	
\end{lemma}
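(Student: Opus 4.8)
The plan is to show that $H^\star_{\varepsilon,\omega,\alpha}$ differs from $H^\star_{0,\varepsilon+\omega,\alpha}$ by at most the claimed correction, by taking a minimizing ensemble for the former and modifying it into a feasible ensemble for the latter. Let $\{p(\lambda),\mathbf{E}^\lambda\}$ be a minimizing ensemble for $H^\star_{\varepsilon,\omega,\alpha}$, so that Eqs.~(\ref{eqConstraint1})--(\ref{eqConstraint3}) hold. Partition the indices into the ``good'' set $G:=\{\lambda:\mathbf{E}^\lambda\in\mathcal{Q}_{J,\alpha}^\omega\}$ with total weight $g:=\sum_{\lambda\in G}p(\lambda)\geq 1-\varepsilon$, and the ``bad'' set $B$ with weight $1-g\leq\varepsilon$. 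The idea is that the observed correlation $\mathbf{E}$ can be written as $\mathbf{E}=g\,\mathbf{E}_G+(1-g)\,\mathbf{E}_B$, where $\mathbf{E}_G:=\frac1g\sum_{\lambda\in G}p(\lambda)\mathbf{E}^\lambda$ is a convex mixture of points in $\mathcal{Q}_{J,\alpha}^\omega$ (hence itself in $\mathcal{Q}_{J,\alpha}^\omega$, by convexity of that set, which follows from Eq.~(\ref{probabilistic J form})), and $\mathbf{E}_B\in[-1,1]^2$ is arbitrary. Then $\mathbf{E}_G$, being a single point of $\mathcal{Q}_{J,\alpha}^\omega$, is of the form $(1-\omega)\mathbf{F}+\omega\mathbf{F}'$ with $\mathbf{F}\in\mathcal{Q}_{J,\alpha}$ and $\mathbf{F}'\in[-1,1]^2$ arbitrary; absorbing the residual weight $(1-g)\leq\varepsilon$ into the "arbitrary" part, one checks that $\mathbf{E}$ itself admits a decomposition $\mathbf{E}=(1-\varepsilon'')\mathbf{F}+\varepsilon''\mathbf{F}''$ with $\varepsilon''\leq\varepsilon+\omega$ and $\mathbf{F}\in\mathcal{Q}_{J,\alpha}$, i.e.\ $\mathbf{E}\in\mathcal{Q}_{J,\alpha}^{\varepsilon+\omega}$. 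Actually the cleaner route for the entropy bound is to build the new ensemble directly: take $\{\tilde p(\lambda),\mathbf{E}^\lambda\}_{\lambda\in G}$ renormalized by $1/g$ as the ``bulk'' and replace all bad $\lambda$ together with the $\omega$-slack inside each good $\mathbf{E}^\lambda$ by a single extremal term, so that the new ensemble has at most $\varepsilon+\omega$ weight on correlations violating the exact $\mathcal{Q}_{J,\alpha}$ constraint, and it still averages to $\mathbf{E}$.

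The entropy accounting is where Eq.~(\ref{eqConcavity}) enters. Writing $\mathbf{E}$ as the $g$-versus-$(1-g)$ mixture $g\,\mathbf{E}_G+(1-g)\mathbf{E}_B$, inverse concavity gives
\[
H(\mathbf{E})\leq g\,H(\mathbf{E}_G)+(1-g)H(\mathbf{E}_B)+H_{\mathrm{bin}}(g),
\]
but we want a lower bound on $H^\star_{\varepsilon,\omega,\alpha}=\sum_\lambda p(\lambda)H(\mathbf{E}^\lambda)$, so instead I would argue in the other direction: since $H$ is concave, $\sum_\lambda p(\lambda)H(\mathbf{E}^\lambda)\leq H(\mathbf{E})$ is the wrong inequality; rather, I bound the value of the relaxed-$\varepsilon=0$ problem. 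Let $\{q(\mu),\mathbf{F}^\mu\}$ be the modified ensemble feasible for $H^\star_{0,\varepsilon+\omega,\alpha}$, obtained from $\{p(\lambda),\mathbf{E}^\lambda\}$ by grouping the weight $\varepsilon''\leq\varepsilon$ (the bad part, roughly) into one lumped term with correlation $\mathbf{E}_B$ and entropy at most $\log 2$ per coordinate, i.e.\ $H(\mathbf{E}_B)\leq \log 2$ (binary entropy of each of the two settings), while keeping the $\omega$-part of each good term unchanged so it only contributes to the relaxed bound $\varepsilon+\omega$. Then
\[
H^\star_{0,\varepsilon+\omega,\alpha}\leq \sum_\mu q(\mu)H(\mathbf{F}^\mu)
\leq \frac{1}{1-\varepsilon''}\Big(\sum_\lambda p(\lambda)H(\mathbf{E}^\lambda)\Big)+\frac{\varepsilon''}{1-\varepsilon''}\log 2 - \frac{\text{(entropy-of-mixing term)}}{1-\varepsilon''},
\]
and rearranging, together with $H^\star_{\varepsilon,\omega,\alpha}=\sum_\lambda p(\lambda)H(\mathbf{E}^\lambda)$ and the monotonicity $\varepsilon''\mapsto\varepsilon''$ bounds, yields
\[
H^\star_{\varepsilon,\omega,\alpha}\geq (1-\varepsilon)H^\star_{0,\varepsilon+\omega,\alpha}-\varepsilon\log 2 + (\text{mixing term})\geq H^\star_{0,\varepsilon+\omega,\alpha}+\log(1-\varepsilon)-\frac{\varepsilon\log(2/\varepsilon)}{1-\varepsilon},
\]
using $(1-\varepsilon)H^\star_{0,\varepsilon+\omega,\alpha}=H^\star_{0,\varepsilon+\omega,\alpha}-\varepsilon H^\star_{0,\varepsilon+\omega,\alpha}\geq H^\star_{0,\varepsilon+\omega,\alpha}+\varepsilon\log(1-\varepsilon)$ since $0\le H^\star\le 1$, and collecting the $-\varepsilon\log\varepsilon$ and $-\varepsilon\log 2$ terms into $-\varepsilon\log(2/\varepsilon)$, then dividing through appropriately; the convention $0\log(2/0):=0$ handles $\varepsilon=0$.

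The main obstacle is bookkeeping the two slacks $\varepsilon$ and $\omega$ simultaneously so that the final relaxation parameter is exactly $\varepsilon+\omega$ and not something larger: one has to be careful that the $\omega$-slack living \emph{inside} each good term $\mathbf{E}^\lambda\in\mathcal{Q}_{J,\alpha}^\omega$ and the $\varepsilon$-slack coming from the bad terms combine additively rather than multiplicatively, which is why the statement has $\varepsilon+\omega$ in the first slot and not, say, $\varepsilon+\omega-\varepsilon\omega$ or worse. Concretely I expect one expands each good $\mathbf{E}^\lambda=(1-\omega)\mathbf{F}^\lambda+\omega\mathbf{G}^\lambda$, uses concavity of $H$ to get $H(\mathbf{E}^\lambda)\geq (1-\omega)H(\mathbf{F}^\lambda)+\omega H(\mathbf{G}^\lambda)-(\text{small})$... actually concavity gives $H(\mathbf{E}^\lambda)\geq (1-\omega)H(\mathbf{F}^\lambda)+\omega H(\mathbf{G}^\lambda)$ directly with no correction, which is the easy direction — so the delicate point is really just organizing the weights. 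The secondary subtlety is the tightness of the entropy-of-mixing correction: one must use $H(\mathbf{E}_B)\leq\log 2$ (per setting, each bit of output contributes at most $\log 2$) and the explicit value of the binary entropy $-\varepsilon''\log\varepsilon''-(1-\varepsilon'')\log(1-\varepsilon'')$, then bound it above by $\varepsilon''\log(1/\varepsilon'')+\varepsilon''$ or similar to land on exactly $\frac{\varepsilon\log(2/\varepsilon)}{1-\varepsilon}$ after dividing by $1-\varepsilon$; monotonicity of the relevant elementary functions in $\varepsilon''\leq\varepsilon$ closes the gap.
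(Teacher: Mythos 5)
Your high-level strategy is the right one and matches the paper's: reuse the minimizing ensemble for $H^\star_{\varepsilon,\omega,\alpha}$, convert the (at most $\varepsilon$) weight on ``bad'' terms into additional $\omega$-type slack, and pay an entropy price controlled by the inverse-concavity inequality~(\ref{eqConcavity}). However, the one step that actually makes this work is missing. Your proposed construction --- keep the good terms and ``replace all bad $\lambda$ \ldots by a single extremal term'' --- does not produce a feasible ensemble for $H^\star_{0,\varepsilon+\omega,\alpha}$: the lumped bad term $\mathbf{E}_B$ is an arbitrary point of $[-1,1]^2$ and in general does \emph{not} lie in $\mathcal{Q}_{J,\alpha}^{\varepsilon+\omega}$, so the new ensemble still has nonzero weight outside the relaxed set, violating the $\varepsilon=0$ constraint. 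The correct move (the paper's) is to fold the bad mass proportionally into \emph{every} good term: with $t:=\sum_{\lambda\notin\Lambda_\omega}p(\lambda)\leq\varepsilon$ and $\mathbf{E}'$ the normalized average of the bad terms, set $q(\lambda):=p(\lambda)/(1-t)$ and $\mathbf{F}^\lambda:=(1-t)\mathbf{E}^\lambda+t\mathbf{E}'$ for $\lambda\in\Lambda_\omega$. Then $\sum_\lambda q(\lambda)\mathbf{F}^\lambda=\mathbf{E}$ and each $\mathbf{F}^\lambda$ individually lies in $\mathcal{Q}_{J,\alpha}^{t+\omega-t\omega}\subseteq\mathcal{Q}_{J,\alpha}^{\varepsilon+\omega}$, so the ensemble is genuinely feasible with zero $\varepsilon$-slack; this is exactly the additive (rather than multiplicative) combination of the two slacks that you flag as the main obstacle but do not resolve.

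The final arithmetic also contains a genuine error. The term $\log(1-\varepsilon)$ in the lemma does not come from bounding $\varepsilon H^\star_{0,\varepsilon+\omega,\alpha}$; your step $-\varepsilon H^\star\geq\varepsilon\log(1-\varepsilon)$ would require $H^\star\leq\log\frac{1}{1-\varepsilon}$, which does not follow from $H^\star\leq 1$ (take $\varepsilon$ small), and even if it held it would yield $\varepsilon\log(1-\varepsilon)$ rather than $\log(1-\varepsilon)$. In the correct accounting one applies~(\ref{eqConcavity}) to each $\mathbf{F}^\lambda=(1-t)\mathbf{E}^\lambda+t\mathbf{E}'$, obtaining the per-term correction $tH(\mathbf{E}')-t\log t-(1-t)\log(1-t)$; summing against the weights $p(\lambda)/(1-t)$ and using $H(\mathbf{E}')\leq\log 2$ gives
\begin{equation*}
H^\star_{0,\varepsilon+\omega,\alpha}\leq H^\star_{\varepsilon,\omega,\alpha}+\frac{t\log(2/t)}{1-t}-\log(1-t),
\end{equation*}
from which the stated bound follows by $t\leq\varepsilon$. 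So the $\log(1-\varepsilon)$ is the $-(1-t)\log(1-t)$ piece of the mixing entropy after division by $1-t$, and the $\frac{\varepsilon\log(2/\varepsilon)}{1-\varepsilon}$ collects the $-t\log t$ piece together with $tH(\mathbf{E}')\leq t\log 2$.
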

\begin{proof}
Consider again a minimizing ensemble for $H^\star_{\varepsilon,\omega,\alpha}$, satisfying Eqs.~(\ref{eqConstraint1})--(\ref{eqConstraint3}). Such choice of ensemble also entails a definition of a set of ``hidden variables'' $\Lambda\ni\lambda$. Define the subset $\Lambda_\omega:=\{\lambda\in\Lambda\,\,|\,\,\mathbf{E}^\lambda \in\mathcal{Q}_{J,\alpha}^\omega\}$, then
\begin{equation}
   t:=1-\sum_{\lambda\in\Lambda_\omega}p(\lambda)\leq\varepsilon.
\end{equation}
Therefore, $\mathbf{E}^J:=\sum_{\lambda\in\Lambda_\omega}\frac{p(\lambda)}{1-t}\mathbf{E}^\lambda\in\mathcal{Q}_{J,\alpha}^\omega$ because $\mathcal{Q}_{J,\alpha}^\omega$ is convex. Now, if $t=0$, let $\mathbf{E}'$ be an arbitrary correlation, while for $t>0$, set $\mathbf{E}':=\sum_{\lambda\not\in\Lambda_\omega}\frac{p(\lambda)} t \mathbf{E}^\lambda$, then $\mathbf{E}=(1-t)\mathbf{E}^J+t\mathbf{E}'$. For all $\lambda\in\Lambda_\omega$, set $q(\lambda):=p(\lambda)/(1-t)$ and $\mathbf{F}^\lambda :=(1-t)\mathbf{E}^\lambda+t\mathbf{E}'$. Then direct calculation shows that
\begin{equation}
   \sum_{\lambda\in\Lambda_\omega} q(\lambda)\mathbf{F}^\lambda = (1-t)\mathbf{E}^J+t\mathbf{E}'=\mathbf{E}.
\end{equation}
Since $\mathbf{E}^\lambda\in\mathcal{Q}_{J,\alpha}^\omega$, there exist correlations $\mathbf{E}^{\lambda,J}\in \mathcal{Q}_{J,\alpha}$ and $\mathbf{E}^{\lambda,\bullet}\in [-1,1]^2$ such that $\mathbf{E}^\lambda=(1-\omega)\mathbf{E}^{\lambda,J}+\omega \mathbf{E}^{\lambda,\bullet}$. Thus
\begin{eqnarray}
\mathbf{F}^\lambda&=& (1-t)\mathbf{E}^\lambda + t \mathbf{E}'\nonumber\\
&=&(1-t)(1-\omega)\mathbf{E}^{\lambda,J}+(1-t)\omega \mathbf{E}^{\lambda,\bullet}+t\mathbf{E}'\nonumber\\
&\in& \mathcal{Q}_{J,\alpha}^{t+\omega-t\omega}\subseteq \mathcal{Q}_{J,\alpha}^{t+\omega}\subseteq \mathcal{Q}_{J,\alpha}^{\varepsilon+\omega}.
\end{eqnarray}
It follows that $\{q(\lambda),\mathbf{F}^\lambda\}_{\lambda\in\Lambda_\omega}$ is a candidate ensemble for $H^\star_{0,\varepsilon+\omega,\alpha}$. Using furthermore that $H(\mathbf{E}')\leq \log 2$, we obtain
\begin{eqnarray}
H^\star_{0,\varepsilon+\omega,\alpha}&\leq& \sum_{\lambda\in\Lambda_\omega} q(\lambda) H(\mathbf{F}^\lambda)\nonumber\\
&\leq& \sum_{\lambda\in\Lambda} q(\lambda)H\left((1-t)\mathbf{E}^\lambda + t \mathbf{E}'\right)\nonumber\\
&\leq& \sum_{\lambda\in\Lambda} \frac{p(\lambda)}{1-t}\left((1-t)H(\mathbf{E}^\lambda)+t H(\mathbf{E}')-t\log t -(1-t)\log(1-t)\right)\nonumber\\
&=&\sum_{\lambda\in\Lambda}p(\lambda) H(\mathbf{E}^\lambda)+\frac {tH(\mathbf{E}')} {1-t} -\frac{t\log t}{1-t}-\log(1-t)\nonumber\\
&\leq& H^\star_{\varepsilon,\omega,\alpha}+\frac{t\log(2/t)}{1-t}-\log(1-t).
\end{eqnarray}
Using finally that $t\leq\varepsilon$ completes the proof.
\end{proof}
Applying Lemma~\ref{LemInverseConcavity} and Corollary~\ref{CorEpsZero} in succession shows the following:
\begin{corollary}
If $\varepsilon,\omega\in [0,1)$, then
\begin{equation}
   H^\star_{\varepsilon,\omega,\alpha}\geq H^\star_{0,0,\alpha+2(\varepsilon+\omega)\cot(J\alpha)/J}+\log(1-\varepsilon)-\frac{\varepsilon\log(2/\varepsilon)}{1-\varepsilon}.
\end{equation}
\end{corollary}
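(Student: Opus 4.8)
The plan is to obtain the bound by composing the two results established immediately above, Lemma~\ref{LemInverseConcavity} and Corollary~\ref{CorEpsZero}, via an appropriate relabelling of the error parameters; no new idea is required.

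First I would apply Lemma~\ref{LemInverseConcavity} with the given parameters $\varepsilon,\omega\in[0,1)$, which yields
\[
   H^\star_{\varepsilon,\omega,\alpha}\;\geq\; H^\star_{0,\varepsilon+\omega,\alpha}+\log(1-\varepsilon)-\frac{\varepsilon\log(2/\varepsilon)}{1-\varepsilon}.
\]
Morally this trades the \emph{epistemic} error $\varepsilon$ for an enlarged \emph{ontic} error $\varepsilon+\omega$, at the price of the entropic penalty term: the worst case for Alice is that the hidden variables responsible for the $\varepsilon$-failures behave as if they were genuinely ontic, and the penalty is exactly the inverse-concavity defect of the Shannon entropy incurred by merging the corresponding sub-ensembles, as quantified in Eq.~(\ref{eqConcavity}).

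Second I would apply Corollary~\ref{CorEpsZero} with the substitution $\varepsilon\mapsto 0$, $\omega\mapsto\varepsilon+\omega$ to the first term on the right-hand side, obtaining
\[
   H^\star_{0,\varepsilon+\omega,\alpha}\;\geq\; H^\star_{0,0,\,\alpha+2(\varepsilon+\omega)\cot(J\alpha)/J}.
\]
Here the remaining ontic error is absorbed into a shift of the rotation angle, which is the content of Lemma~\ref{LemInclusionRelaxed} (itself resting on the elementary estimate of Lemma~\ref{LemArccos}). Chaining the two displayed inequalities reproduces precisely the claimed statement.

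The only point that warrants a line of care is the parameter bookkeeping. Lemma~\ref{LemInverseConcavity} requires $\varepsilon,\omega<1$, which is assumed, so the first step is unconditional. For the second step, if $\varepsilon+\omega\geq 1$ or $J\alpha+2(\varepsilon+\omega)\cot(J\alpha)\geq\pi/2$, the relevant relaxed set degenerates to the full square $[-1,1]^2$, so that $H^\star_{0,0,\alpha+2(\varepsilon+\omega)\cot(J\alpha)/J}=0$; since the penalty term $\log(1-\varepsilon)-\varepsilon\log(2/\varepsilon)/(1-\varepsilon)$ is non-positive for $\varepsilon\in[0,1)$, the asserted inequality holds trivially in that regime. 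Thus there is no genuine obstacle: all the real work lives in Lemmas~\ref{LemArccos}--\ref{LemInverseConcavity} and Corollary~\ref{CorEpsZero}, and the present corollary is just their composition.
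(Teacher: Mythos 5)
Your proposal is correct and follows exactly the paper's own route: apply Lemma~\ref{LemInverseConcavity} to pass from $(\varepsilon,\omega)$ to $(0,\varepsilon+\omega)$ at the cost of the entropic penalty, then Corollary~\ref{CorEpsZero} with $\varepsilon\mapsto 0$, $\omega\mapsto\varepsilon+\omega$ to absorb the remaining ontic error into the angle shift. Your additional remark handling the degenerate regime $\varepsilon+\omega\geq 1$ (where the relaxed set fills the square and the bound is vacuous) is a careful touch the paper leaves implicit, but it does not change the argument.
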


\section{Proof that $\cq_{J}\subseteq \crr_{J}$}
\label{proofquantumsubseteqrotationset}

Since $\mathcal{R}_J$ is a convex set, it is sufficient to show that extremal correlations of $\mathcal{Q}_J$ are contained in it. That is, we can disregard shared randomness and consider a fixed POVM $\{M_b\}_b$ and an arbitrary, normalized pure state
\begin{equation}
\ket{\phi}=\sum_{j=-J}^J \phi_j \ket{\psi_j},
\end{equation}
where $\ket{\psi_j}\in\ch_j$ is normalizd and $\phi_j=\braket{\psi_j}{\phi}$. We choose an orthonormal basis for $\mathcal{H}$, such that every $\ket{\psi_k}$ is an element of this basis, and we calculate the probabilities in this basis:
\begin{eqnarray}
P(b|\alpha)&=&\tr[U_\alpha\ketbra{\phi}{\phi}U_\alpha^\dagger M_b]\nonumber\\
&=&\sum_{j,j'=-J}^{J} \phi_j \phi_{j'}^* e^{ij\alpha} M^b_{jj'} e^{-ij'\alpha}\nonumber\\
 &=&\sum_{l=-2J}^{2J}\sum_{\begin{matrix}
-J\leq j,j'\leq J\!:\\
j\!-\!j'=l
\end{matrix}} \phi_j\phi_{j'}^*M^b_{jj'}e^{i(j-j')\alpha}\nonumber\\
 &=&\sum_{l=-2J}^{2J} a_l e^{il\alpha},
\end{eqnarray}
where we have defined the coefficients
 \begin{equation}
a_l=\sum_{\begin{matrix}
 -J\leq j,j'\leq J\!:\\
j\!-\!j'=l
 \end{matrix}} \phi_j\phi_{j'}^*M^b_{jj'},
\end{equation}
for $-J\leq l\leq J$, and $M^b_{jj'}:=\langle j'|M_b|j\rangle$. The coefficients have the property
\begin{eqnarray}
a_l&=&\sum_{\begin{matrix}
 -J\leq j,j'\leq J\\
 j\!-\!j'=l
 \end{matrix}} \phi_j \phi_{j'}^*M^b_{jj'}\nonumber\\
 &=&\sum_{\begin{matrix}
 -J\leq j,j'\leq J\\
j\!-\!j'=l
 \end{matrix}} \left(\phi_j^*\phi_{j'}\right)^*\left(M^b_{j'j}\right)^*=a^*_{-l},
\end{eqnarray}
which we can use to write 
\begin{eqnarray}
P(b|\alpha)&=&\sum_{l=-2J}^{2J} a_l (\cos(l\alpha)+i\sin(l\alpha))\nonumber\\
&=&a_0+\sum_{l=1}^{2J}\left( 2\Re(a_l)\cos(l\alpha)-2\Im(a_l)\sin(l\alpha)\right),\label{QuantumSBs}
\end{eqnarray}
 By observing that this is exactly of the form~(\ref{SBs}), we conclude that $\mathcal{Q}_J\subseteq\mathcal{R}_J$, and thus also $\mathcal{Q}_{J,\alpha}\subseteq \mathcal{R}_{J,\alpha}$.

\section{Proof that $\cq_{1/2}=\crr_{1/2}$}
\label{proofq1/2=r1/2}

In this section we will show that for a $J=1/2$ system, not only do the sets $\cq_{1/2,\alpha}$ and $\crr_{1/2,\alpha}$ coincide, but in fact every rotation box can be simulated by a quantum model, i.e. $\cq_{1/2}=\crr_{1/2}$.

\textit{Rotation boxes.} We will start our discussion by giving a characterisation of the convex set of rotation box correlations $\crr_{1/2}$.
 Every $J=1/2$ rotation box is described by probability distributions of the form
\begin{equation}
P(+1|\alpha)=c_0+c_1 \cos{\alpha}+ s_1 \sin\alpha,\label{probability2levelspacetimebox}
\end{equation} 
where $c_0,c_1,s_1\in\mathbb{R}$, s.t. $0\leq P(+1|\alpha)\leq 1\,\forall \alpha$. Conditions for $c_0,c_1$ and $s_1$ can be found by the fact that $P(+1|\alpha)$ has to produce valid probabilities for all $\alpha$ and if $P(+1|\alpha)=1$ or $P(+1|\alpha)=0$ it has to be a maximum or minimum respectively. This allows us to write the set $\crr_{1/2}$ in the form:
\begin{eqnarray}
\crr_{1/2}=\bigg\{P(+1|\alpha)\,\,|\,\,0\leq c_0 \leq 1;\,\, &\mbox{if}&\,\, \frac{1}{2}< c_0, \,\mbox{then}\, (c_0)^2\geq (c_1)^2+(s_1)^2;\nonumber\\ \,\, &\mbox{if}& \,\,  c_0\leq\frac{1}{2} \,\mbox{then}\, (1-c_0)^2\geq (c_1)^2+(s_1)^2    
\bigg\}.
\end{eqnarray}
The convexity of $\crr_{1/2}$ follows immediately from the fact that the set of probabilities is convex and that a convex combination of functions of the form $c_0+c_1\cos\alpha+s_1\sin\alpha$ gives again a function of the same form. 
Furthermore, the set $\crr_{1/2}$ is isomorphic to the subset $\mathcal{A}\subset \mathbb{R}^3$, given by
\begin{eqnarray}
\crr_{1/2}&\cong&\left\{\begin{pmatrix}
(x & y & z)
\end{pmatrix}^T| \,
0\leq x \leq 1;\,\mbox{if} \, x\leq\frac{1}{2}, \mbox{then}\, x^2\geq y^2+z^2;\,\mbox{if}\, \frac{1}{2}< x, \mbox{then}\, (1-x)^2\geq y^2+z^2    
\right\}\nonumber\\
&=& \mathcal{A}\subset \mathbb{R}^3.
\end{eqnarray}
\begin{figure}
\centering 
\includegraphics[width=.3\columnwidth]{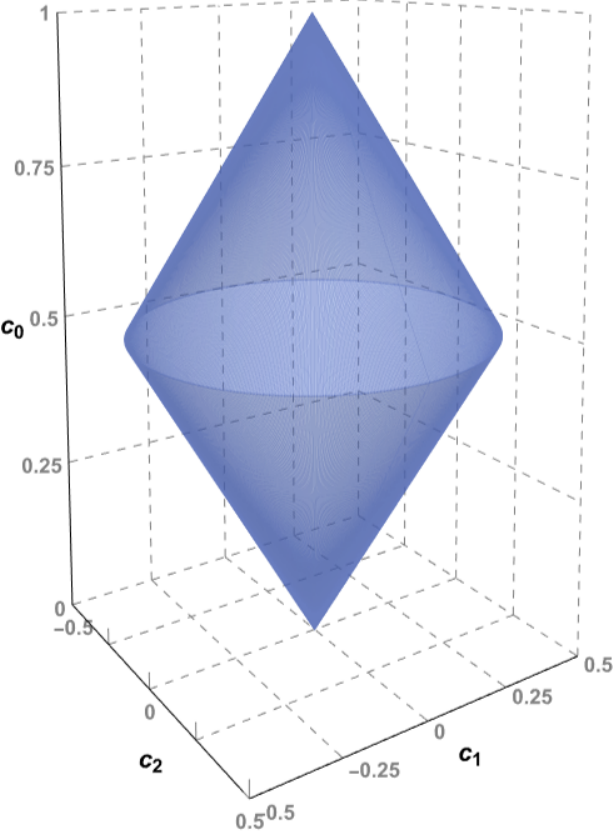}
\caption{$\crr_{1/2}=\cq_{1/2}$ as a convex set.}
\label{3dplot}
\end{figure}

 The isomorphism  $\varphi:S\rightarrow\mathcal{A}$ is given by
 \begin{equation}
 \varphi(P(+1|\alpha))=\begin{pmatrix}
 c_0\\ c_1 \\ s_1
 \end{pmatrix},
 \end{equation}
 which is clearly linear and therefore affine -- that is, $\mathcal{A}$ inherits its convex properties form $\crr_{1/2}$ and vice versa.

It can be  shown that the extreme points $\partial_{\rm ext}\mathcal{A}$ of the set $\mathcal{A}$, and therefore the extreme points $\partial_{\rm ext}\crr_{1/2}$ of $\crr_{1/2}$ are given by:
\begin{eqnarray}
\partial_{\rm ext}\mathcal{A}&=&\left\{\begin{pmatrix}
0\\
0\\
0
\end{pmatrix},\begin{pmatrix}
1\\
0\\
0
\end{pmatrix},\left\{\left.\begin{pmatrix}
\frac{1}{2}\\
y\\
z
\end{pmatrix}\right| y^2+z^2=\frac{1}{4}\right\}\right\}\nonumber\\
&\cong& \partial_{\rm ext}\crr_{1/2}.
\end{eqnarray}
An illustration of $\crr_{1/2}$ is given in Fig.~\ref{3dplot}.

\textit{Rotation Boxes Simulated by Quantum Model.}
We now find quantum models for the extreme points of $\crr_{1/2}$.

The extreme points $P(+1|\alpha)=1$ and $P(+1|\alpha)=0$ are just constant probability distributions and therefore can obviously be simulated by quantum theory. For example, consider $\ket{\phi_{\frac{1}{2}}}\in \mathcal{H}_{+\frac{1}{2}}$ and the measurement  $\left\{P_+=\ketbra{\phi_{+\frac{1}{2}}}{\phi_{+\frac{1}{2}}},P_-=\mathbf{1}-P_+\right\}$. This yields the probability distribution\begin{eqnarray}
P(+1|\alpha)&=&\bra{\phi_{+\frac{1}{2}}}U^\dagger_{\alpha}P_+U_{\alpha}\ket{\phi_{+\frac{1}{2}}}\nonumber\\
&=&\left|e^{i\frac{\alpha}{2}}\braket{\phi_{+\frac{1}{2}}}{\phi_{+\frac{1}{2}}}\right|^2=1.
\end{eqnarray}
In similar fashion one can construct a quantum model for $P(+1|\alpha)=0$.

All other extreme points are of the form 
\begin{eqnarray}
P_{\tau}(+1|\alpha)&=&\frac{1}{2}+c_1^\tau \cos\alpha+s_1^\tau\sin\alpha\nonumber\\
&=&\frac{1}{2}\left(1+\cos\alpha\cos\tau+\sin\alpha\sin\tau\right)\nonumber\\
&=&\frac{1}{2}\left(1+\cos(\alpha-\tau)\right),\label{distextrempoints}
\end{eqnarray}
where we have used that we can write $c^\tau_1=(1/2)\cos\tau$ and $s_1^\tau=(1/2)\sin\tau$ such that $\left(c^\tau_1\right)^2+\left(s_1^\tau\right)^2=1/4$ is satisfied.

To find a quantum model for these probability distributions we define the state \begin{equation}
\ket{\phi}=\frac{1}{\sqrt{2}}\left(\ket{\phi_{+\frac{1}{2}}}+\ket{\phi_{-\frac{1}{2}}}\right),
\end{equation}
where $\ket{\phi_{\pm\frac{1}{2}}}\in\mathcal{H}_{\pm\frac{1}{2}}$.

Furthermore, we will consider the family of states\begin{equation}
\ket{\phi_\tau}=U_{-\tau}\ket{\phi}
\end{equation}
and the projective measurement $\{P_+=\ketbra{\phi}{\phi},P_-=\mathbf{1}-P_+\}$.
Hence, we find the probability distributions
\begin{eqnarray}
P_{\tau}(+1|\alpha)&=&\bra{\phi_{\tau}}U^\dagger_\alpha P_+ U_{\alpha}\ket{\phi_\tau}=\bra{\phi}U^\dagger_{\alpha-\tau}P_+ U_{\alpha-\tau}\ket{\phi}\nonumber\\
&=&\left|\bra{\phi}U_{\alpha-\tau}\ket{\phi}\right|^2=\left|\frac{1}{2}\left(e^{i\frac{\alpha-\tau}{2}}+e^{-i\frac{\alpha-\tau}{2}}\right)\right|^2\nonumber\\
&=& \cos^2\left(\frac{\alpha-\tau}{2}\right)=\frac{1}{2}\left(1+\cos(\alpha-\tau)\right),
\end{eqnarray}
which are precisely the same probability distributions as in (\ref{distextrempoints}).

We have seen that all extreme points of $\crr_{1/2}$ can be associated with probability distributions that are compatible with $\mathcal{Q}_{1/2}$. To also associate the non-extreme points of $\crr_{1/2}$ with probabilities compatible with $\mathcal{Q}_{1/2}$, we use that all (non-extreme) points $P$ can be written as a convex combination of of at most four extreme points due to Carath\'eodory's theorem~\cite{Webster}. That is,
\begin{equation}
P(+|\alpha)=\sum_{i=1}^4\lambda^i P^i_{ex}(+|\alpha),
\end{equation}
with $0\leq\lambda^i\leq 1$ and $\sum_{i=1}^4 \lambda^i=1$. As just discussed, we can write down  a quantum model for all extreme points
\begin{equation}
P^i_{ex}(+1|\alpha)=\bra{\phi^i}U_{\alpha}^\dagger P^i_+ U_{\alpha}\ket{\phi^i}.
\end{equation}   
Now, we introduce a four-dimensional auxiliary system  and we define the state
\begin{equation}
\ket{\psi}=\sum_{i=1}^4\sqrt{\lambda^i} \ket{\phi^i}\otimes\ket{i},
\end{equation}
where $\ket{\phi^i}$ are the states from the quantum model that generated the extreme points.
Furthermore, we adapt the representation of ${\rm SO}(2)$ to $U_\alpha\otimes\mathbf{1}$, which does not affect the assumption on the maximal spin $J$ (see also Appendix~\ref{purestateeqmixedstateset}).
 We define the following measurement operator
 \begin{equation}
 P_+=\sum_{i=1}^4 P^i_+\otimes \ketbra{i}{i}.
\end{equation}  
Then a simple calculation gives
\begin{eqnarray}
\bra{\psi}\big(U_\alpha^\dagger\otimes\mathbf{1}\big)P_+\big(U_\alpha\otimes\mathbf{1}\big)\ket{\psi}=P(+|\alpha)
\end{eqnarray}
for all $\alpha$, which shows that we can also always find quantum models that are compatible with $\mathcal{Q}_{1/2}$ for all non-extreme points of $\crr_{1/2}$. Hence, $\crr_{1/2}=\cq_{1/2}$. In~\cite{Aloy}, an alternative proof of  $\crr_{1/2}=\cq_{1/2}$ is provided by introducing a GPT system capable of reproducing all correlations of $\crr_{1/2}$ and then showing the equivalence of this GPT system with the rebit.

\section{Proof that $\cq_{J,\alpha}=\crr_{J,\alpha}$}
\label{proofquantumeqrotationset}

We want to show that the rotation box correlations given by Eq.~(\ref{rotation_box_set}) describes the same set as that in Eq.~(\ref{quantum_set}), for the quantum box. We can show this by rearranging Eq.~(\ref{rotation_box_set}):
\begin{align}
    \cos{J\alpha} &\leq \cos\Big(\frac{1}{2}\big|\arcsin E_2-\arcsin E_1\big|\Big) \nonumber\\
    &= \cos\Big(\frac{1}{2}\arcsin E_2-\frac{1}{2}\arcsin E_1\Big) \nonumber\\
    &= \cos\big(\frac{1}{2}\arcsin E_2\big)\cos\big(\frac{1}{2}\arcsin E_1\big) + \sin\big(\frac{1}{2}\arcsin E_2\big)\sin\big(\frac{1}{2}\arcsin E_1\big).
\end{align}
We use the identities $\cos x=\sqrt{\frac{1}{2}+\frac{1}{2}\cos(2x)}$ and $\sin x=\sqrt{\frac{1}{2}-\frac{1}{2}\cos(2x)}$, and then for the following line $\cos(\arcsin{x})=\sqrt{1-x^2}$. Therefore the above can be further rewritten as follows:
\begin{align}
    =& \sqrt{\frac{1}{2}+\frac{1}{2}\cos\big(\arcsin E_2\big)}\sqrt{\frac{1}{2}+\frac{1}{2}\cos\big(\arcsin E_1\big)} +\sqrt{\frac{1}{2}-\frac{1}{2}\cos\big(\arcsin E_2\big)}\sqrt{\frac{1}{2}-\frac{1}{2}\cos\big(\arcsin E_1\big)}\nonumber\\
    =& \sqrt{\frac{1}{2}+\frac{\sqrt{1-E^2_2}}{2}}\sqrt{\frac{1}{2}+\frac{\sqrt{1-E^2_1}}{2}} +\sqrt{\frac{1}{2}-\frac{\sqrt{1-E^2_2}}{2}}\sqrt{\frac{1}{2}-\frac{\sqrt{1-E^2_1}}{2}}.
\end{align}

We can derive the following useful identity:
\begin{align}
    \sqrt{\frac{1}{2}\pm\frac{\sqrt{1\!-\!x^2}}{2}}
=\frac{1}{2}\big(\sqrt{1+x}\pm\sqrt{1-x}\big),
\end{align}
which can be used as a substitute so that we arrive at:
\begin{align}
    \cos{J\alpha} &\leq \frac{1}{4}\bigg(\! \big(\sqrt{1\!+\!E_2}+\sqrt{1\!-\!E_1}\big) \big(\sqrt{1\!+\!E_2}+\sqrt{1\!-\!E_1}\big) + \big(\sqrt{1\!+\!E_2}-\sqrt{1\!-\!E_1}\big) \big(\sqrt{1\!+\!E_2}-\sqrt{1\!-\!E_1}\big)\! \bigg) \nonumber\\
    &= \frac{1}{2}\Big(\sqrt{1+E_2}\sqrt{1+E_1} + \sqrt{1-E_2}\sqrt{1-E_1}\Big).
\end{align}
The final line is the same as that given in Eq.~(\ref{quantum_set}); i.e.\ the rotation box condition described by Eq.~(\ref{rotation_box_set}) is identical to the quantum condition described by Eq.~(\ref{quantum_set}). Therefore, combining with Appendix~\ref{proofquantumsubseteqrotationset}, we have shown that $\crr_{J,\alpha}=\mathcal{Q}_{J,\alpha}$.

\end{document}